\documentclass[11pt]{article}
\usepackage{epsf}
\usepackage{amsmath}
\usepackage{epsfig}
\usepackage{times}
\usepackage{amssymb}
\usepackage{amsthm}
\usepackage{setspace}
\usepackage{cite}

\usepackage{algorithmic}  
\usepackage{algorithm}

\usepackage{shadow}
\usepackage{fancybox}
\usepackage{fancyhdr}

\def\w{{\bf w}}

\def\y{{\bf y}}

\def\x{{\bf x}}

\def\x{{\mathbf x}}

\def\w{{\bf w}}

\def\x{{\bf x}}
\def\y{{\bf y}}

\def\b{{\bf b}}

\def\h{{\bf h}}

\def\be{\begin{equation}}
\def\ee{\end{equation}}
\def\ba{\left[\begin{array}}
\def\ea{\end{array}\right]}

\def\w{{\bf w}}

\def\x{{\bf x}}
\def\y{{\bf y}}

\def\b{{\bf b}}

\def\1{{\bf 1}}

\def\g{{\bf g}}
\def\0{{\bf 0}}



\def\Sweak{S_{weak}}

\def\betaweak{\beta_{weak}}


\def\betasec{\beta_{sec}}

\def\Ssec{S_{sec}}

\def\Sstr{S_{str}}
\def\betastr{\beta_{str}}




\newtheorem{theorem}{Theorem}
\newtheorem{corollary}{Corollary}

\newtheorem{lemma}{Lemma}

\setlength{\oddsidemargin}{0in} \setlength{\evensidemargin}{0in}
\setlength{\textwidth}{6.5in} 
\setlength{\textheight}{9in} 
\setlength{\topmargin}{-0.25in}

\begin{document}

\begin{singlespace}

\title {Lifting $\ell_q$-optimization thresholds 
}
\author{
\textsc{Mihailo Stojnic}
\\
\\
{School of Industrial Engineering}\\
{Purdue University, West Lafayette, IN 47907} \\
{e-mail: {\tt mstojnic@purdue.edu}} }
\date{}
\maketitle

\centerline{{\bf Abstract}} \vspace*{0.1in}

In this paper we look at a connection between the $\ell_q,0\leq q\leq 1$, optimization and under-determined linear systems of equations with sparse solutions. The case $q=1$, or in other words $\ell_1$ optimization and its a connection with linear systems has been thoroughly studied in last several decades; in fact, especially so during the last decade after the seminal works \cite{CRT,DOnoho06CS} appeared. While current understanding of $\ell_1$ optimization-linear systems connection is fairly known, much less so is the case with a general $\ell_q,0<q<1$, optimization. In our recent work \cite{StojnicLqThrBnds10} we provided a study in this direction. As a result we were able to obtain a collection of lower bounds on various $\ell_q,0\leq q\leq 1$, optimization thresholds. In this paper, we provide a substantial conceptual improvement of the methodology presented in \cite{StojnicLqThrBnds10}. Moreover, the practical results in terms of achievable thresholds are also encouraging. As is usually the case with these and similar problems, the methodology we developed emphasizes their a combinatorial nature and attempts to somehow handle it. Although our results' main contributions should be on a conceptual level, they already give a very strong suggestion that $\ell_q$ optimization can in fact provide a better performance than $\ell_1$, a fact long believed to be true due to a tighter optimization relaxation it provides to the original $\ell_0$ sparsity finding oriented original problem formulation. As such, they in a way give a solid boost to further exploration of the design of the algorithms that would be able to handle $\ell_q,0<q<1$, optimization in a reasonable (if not polynomial) time.

\vspace*{0.25in} \noindent {\bf Index Terms: under-determined linear systems; sparse solutions; $\ell_q$-minimization}.

\end{singlespace}

\section{Introduction}
\label{sec:back}

Although the methods that we will propose have no strict limitations as to what structure they can handle we will restrict our attention to under-determined linear systems of equations with sparse solutions. As is well known in mathematical terms a linear system of equations can be written as
\begin{equation}
A\x=\y \label{eq:system}
\end{equation}
where $A$ is an $m\times n$ ($m<n$) system matrix and $\y$ is
an $m\times 1$ vector. Typically one is then given $A$ and $\y$ and the goal is to determine $\x$. However when ($m<n$) the odds are that there will be many solutions and that the system will be under-determined. In fact that is precisely the scenario that we will look at. However, we will slightly restrict our choice of $\y$. Namely, we will assume that $\y$ can be represented as
\begin{equation}
\y=A\tilde{\x}, \label{eq:yrepsystem}
\end{equation}
where we also assume that $\tilde{\x}$ is a $k$-sparse vector (here and in the rest of the paper, under $k$-sparse vector we assume a vector that has at most $k$ nonzero components). This essentially means that we are interested in solving (\ref{eq:system}) assuming that there is a solution that is $k$-sparse. Moreover, we will assume that there is no solution that is less than $k$-sparse, or in other words, a solution that has less than $k$ nonzero components.

These systems gained a lot of attention recently in first place due to seminal results of \cite{CRT,DOnoho06CS}. In fact particular types of these systems that happened to be of substantial mathematical interest are the so-called random systems. In such systems one models generality of $A$ through a statistical distribution. Such a concept will be also of our interest in this paper. To ease the exposition, whenever we assume a statistical context that will mean that the system (measurement) matrix $A$ has i.i.d. standard normal components. We also emphasize that only source of randomness will the components of $A$. Also, we do mention that all of our work is in no way restricted to a Gaussian type of randomness. However, we find it easier to present all the results under such an assumption. More importantly a great deal of results of many of works that will refer to in a statistical way also hold for various non-gaussian types of randomness. As for \cite{CRT,DOnoho06CS}, they looked at a particular technique called $\ell_1$ optimization and showed for the very first time that in a statistical context such a technique can recover a sparse solution (of sparsity linearly proportional to the system dimension). These results then created an avalanche of research and essentially could be considered as cornerstones of a field today called compressed sensing (while there is a tone of great work done in this area during the last decade, and obviously the literature on compressed sensing is growing on a daily basis, we instead of reviewing all of them refer to two introductory papers \cite{CRT,DOnoho06CS} for a further comprehensive understanding of their meaning on a grand scale of all the work done over the last decade).

Although our results will be easily applicable to any regime, to make writing in the rest of the paper easier, we will assume the \emph{typical}
so-called \emph{linear} regime, i.e. we will assume that $k=\beta n$
and that the number of equations is $m=\alpha n$ where
$\alpha$ and $\beta$ are constants independent of $n$ (more
on the non-linear regime, i.e. on the regime when $m$ is larger than
linearly proportional to $k$ can be found in e.g.
\cite{CoMu05,GiStTrVe06,GiStTrVe07}).

Now, given the above sparsity assumption, one can then rephrase the original problem (\ref{eq:system}) in the following way
\begin{eqnarray}
\mbox{min} & & \|\x\|_{0}\nonumber \\
\mbox{subject to} & & A\x=\y. \label{eq:l0}
\end{eqnarray}
Assuming that $\|\x\|_{0}$ counts how many nonzero components $\x$ has, (\ref{eq:l0}) is essentially looking for the sparsest $\x$ that satisfies (\ref{eq:system}), which, according to our assumptions, is exactly $\tilde{\x}$. Clearly, it would be nice if one can solve in a reasonable (say polynomial) time (\ref{eq:l0}). However, this does not appear to be easy. Instead one typically resorts to its relaxations that would be solvable in polynomial time. The first one that is typically employed is called $\ell_1$-minimization. It  essentially relaxes the $\ell_0$ norm in the above optimization problem to the first one that is known to be solvable in polynomial time, i.e. to $\ell_1$. The resulting optimization problem then becomes
\begin{eqnarray}
\mbox{min} & & \|\x\|_{1}\nonumber \\
\mbox{subject to} & & A\x=\y. \label{eq:l1}
\end{eqnarray}
Clearly, as mentioned above (\ref{eq:l1}) is an optimization problem solvable in polynomial time. In fact it is a very simple linear program. Of course the question is: how well does it approximate the original problem (\ref{eq:l0}). Well, for certain system dimensions it works very well and actually can find exactly the same solution as (\ref{eq:l0}). In fact, that is exactly what was shown in \cite{CRT,DOnoho06CS,DonohoPol}. A bit more specifically, it was shown in \cite{CRT} that if
$\alpha$ and $n$ are given, $A$ is given and satisfies the restricted isometry property (RIP) (more on this property the interested reader can find in e.g. \cite{Crip,CRT,Bar,Ver,ALPTJ09}), then
any unknown vector $\tilde{\x}$ in (\ref{eq:yrepsystem}) with no more than $k=\beta n$ (where $\beta$
is a constant dependent on $\alpha$ and explicitly
calculated in \cite{CRT}) non-zero elements can be recovered by
solving (\ref{eq:l1}). On the other hand
in \cite{DonohoUnsigned,DonohoPol} Donoho considered the polytope obtained by
projecting the regular $n$-dimensional cross-polytope $C_p^n$ by $A$. He then established that
the solution of (\ref{eq:l1}) will be the $k$-sparse solution of
(\ref{eq:system}) if and only if
$AC_p^n$ is centrally $k$-neighborly
(for the definitions of neighborliness, details of Donoho's approach, and related results the interested reader can consult now already classic references \cite{DonohoUnsigned,DonohoPol,DonohoSigned,DT}). In a nutshell, using the results
of \cite{PMM,AS,BorockyHenk,Ruben,VS}, it is shown in
\cite{DonohoPol}, that if $A$ is a random $m\times n$
ortho-projector matrix then with overwhelming probability $AC_p^n$ is centrally $k$-neighborly (as usual, under overwhelming probability we in this paper assume
a probability that is no more than a number exponentially decaying in $n$ away from $1$). Miraculously, \cite{DonohoPol,DonohoUnsigned} provided a precise characterization of $m$ and $k$ (in a large dimensional and statistically typical context) for which this happens. In a series of our own work (see, e.g. \cite{StojnicICASSP09,StojnicCSetam09,StojnicUpper10}) we then created an alternative probabilistic approach which was capable of matching the statistically typical results of Donoho \cite{DonohoPol} through a purely probabilistic approach.

Of course, there are many other algorithms that can be used to attack (\ref{eq:l0}). Among them are also numerous variations of the standard $\ell_1$-optimization from e.g. \cite{CWBreweighted,SChretien08,SaZh08,StojnicICASSP10knownsupp} as well as many other conceptually completely different ones from e.g. \cite{JATGomp,JAT,NeVe07,DTDSomp,NT08,DaiMil08,DonMalMon09}. While all of them are fairly successful in their own way and with respect to various types of performance measure, one of them, namely the so called AMP from \cite{DonMalMon09}, is of particular interest when it comes to $\ell_1$. What is fascinating about AMP is that it is a fairly fast algorithm (it does require a bit of tuning though) and it has provably the same statistical performance as (\ref{eq:l1}) (for more details on this see, e.g. \cite{DonMalMon09,BayMon10}). Since our main goal in this paper is to a large degree related to $\ell_1$ we stop short of reviewing further various alternatives to (\ref{eq:l1}) and instead refer to any of the above mentioned papers as well as our own \cite{StojnicCSetam09,StojnicUpper10} where these alternatives were revisited in a bit more detail.

In the rest of this paper we however look at a natural modification of $\ell_1$ called $\ell_q, 0\leq q\leq 1$.

\section{$\ell_q$-minimization}
\label{sec:lqmin}

As mentioned above, the first relaxation of (\ref{eq:l0}) that is typically employed is
the $\ell_1$ minimization from (\ref{eq:l1}). The reason for that is that it is the first of the norm relaxations that results in an optimization problem that is solvable in polynomial time. One can alternatively look at the following (tighter) relaxation (considered in e.g. \cite{GN03,GN04,GN07,FL08})
\begin{eqnarray}
\mbox{min} & & \|\x\|_{q}\nonumber \\
\mbox{subject to} & & A\x=\y, \label{eq:lq}
\end{eqnarray}
where for concreteness we assume $q\in[0,1]$ (also we assume that $q$ is a constant independent of problem dimension $n$). The optimization problem in (\ref{eq:lq}) looks very similar to the one in (\ref{eq:l1}). However, there is one important difference, the problem in (\ref{eq:l1}) is essentially a linear program and easily solvable in polynomial time. On the other hand the problem in (\ref{eq:lq}) is not known to be solvable in polynomial time. In fact it can be a very hard problem to solve. Since our goal in this paper will not be the design of algorithms that can solve (\ref{eq:lq}) quickly we refrain from a further discussion in that direction. Instead, we will assume that (\ref{eq:lq}) somehow can be solved and then we will look at scenarios when such a solution matches $\tilde{\x}$. In a way our analysis will then be useful in providing some sort of answers to the following question: if one can solve (\ref{eq:lq}) in a reasonable (if not polynomial) amount of time how likely is that its solution will be $\tilde{\x}$.

This is almost no different from the same type of question we considered when discussing performance of (\ref{eq:l1}) above and obviously the same type of question attacked in \cite{CRT,DOnoho06CS,DonohoPol,StojnicCSetam09,StojnicUpper10}. To be a bit more specific, one can then ask for what system dimensions (\ref{eq:lq}) actually works well and finds exactly the same solution as (\ref{eq:l0}), i.e. $\tilde{\x}$. A typical way to attack such a question would be to translate the results that relate to $\ell_1$ to general $\ell_q$ case. In fact that is exactly what has been done for many techniques, including obviously the RIP one developed in \cite{CRT}. Also, in our recent work \cite{StojnicLqThrBnds10} we attempted to proceed along the same lines and translate our own results from \cite{StojnicCSetam09} that relate to $\ell_1$ optimization to the case of interest here, i.e. to the $\ell_q,0\leq q\leq 1$, optimization. To provide a more detailed explanation as to what was done in \cite{StojnicLqThrBnds10} we will first recall on a couple of definitions. These definitions relate to what is known as $\ell_q,0\leq q\leq 1$, optimization thresholds.

First, we start by recalling that when one speaks about equivalence of (\ref{eq:lq}) and (\ref{eq:l0}) one actually may want to consider several types of such an equivalence. The classification into several types is roughly speaking based on the fact that the equivalence is achieved all the time, i.e. for any $\tilde{\x}$ or only sometimes, i.e. only for some $\tilde{\x}$. Since we will heavily use these concepts in the rest of the paper, we below make all of them mathematically precise (many of the definitions that we use below can be found in various forms in e.g. \cite{DonohoPol,DT,DTciss,DTjams2010,StojnicCSetam09,StojnicICASSP09,StojnicLqThrBnds10}).

We start with a well known statement (this statement in case of $\ell_1$ optimization follows directly from seminal works \cite{CRT,DOnoho06CS}). For any given constant $\alpha\leq 1$ there is a maximum
allowable value of $\beta$ such that for \emph{all} $k$-sparse $\tilde{\x}$ in (\ref{eq:yrepsystem}) the solution of (\ref{eq:lq})
is with overwhelming probability exactly the corresponding $k$-sparse $\tilde{\x}$. One can then (as is typically done) refer to this maximum allowable value of
$\beta$ as the \emph{strong threshold} (see
\cite{DonohoPol}) and denote it as $\beta_{str}^{(q)}$. Similarly, for any given constant
$\alpha\leq 1$ and \emph{all} $k$-sparse $\tilde{\x}$ with a given fixed location of non-zero components there will be a maximum allowable value of $\beta$ such that
(\ref{eq:lq}) finds the corresponding $\tilde{\x}$ in (\ref{eq:yrepsystem}) with overwhelming
probability. One can refer to this maximum allowable value of
$\beta$ as the \emph{sectional threshold} and denote it by $\beta_{sec}^{(q)}$ (more on this or similar corresponding $\ell_1$ optimization sectional thresholds definitions can be found in e.g. \cite{DonohoPol,StojnicCSetam09,StojnicLqThrBnds10}). One can also go a step further and consider scenario where for any given constant
$\alpha\leq 1$ and \emph{a} given $\tilde{\x}$
there will be a maximum allowable value of $\beta$ such that
(\ref{eq:lq}) finds that given $\tilde{\x}$ in (\ref{eq:yrepsystem}) with overwhelming
probability. One can then refer to such a $\beta$ as the \emph{weak threshold} and denote it by $\beta_{weak}^{(q)}$ (more on this and similar definitions of the weak threshold the interested reader can find in e.g. \cite{StojnicICASSP09,StojnicCSetam09,StojnicLqThrBnds10}).

When viewed within this frame the results of \cite{CRT,DOnoho06CS} established that $\ell_1$-minimization achieves recovery through a linear scaling of all important dimensions ($k$, $m$, and $n$). Moreover, for all $\beta$'s defined above lower bounds were provided in \cite{CRT}. On the other hand, the results of \cite{DonohoPol,DonohoUnsigned} established the exact values of $\beta_w^{(1)}$ and provided lower bounds on $\beta_{str}^{(1)}$ and $\beta_{sec}^{(1)}$. Our own results from \cite{StojnicCSetam09,StojnicUpper10} also established the exact values of $\beta_w^{(1)}$ and provided a different set of lower bounds on $\beta_{str}^{(1)}$ and $\beta_{sec}^{(1)}$. When it comes to a general $0\leq q\leq 1$ case, results from \cite{StojnicLqThrBnds10} established lower bounds on all three types of thresholds, $\beta_{str}^{(q)}$, $\beta_{sec}^{(q)}$, and $\beta_{weak}^{(q)}$. While establishing these bounds was an important step in the analysis of $\ell_q$ optimization, they were not fully successful all the time (on occasion, they actually fell even below the known $\ell_1$ lower bounds). In this paper we provide a substantial conceptual improvement of the results we presented in \cite{StojnicLqThrBnds10}. Such an improvement is in first place due to a recent progress we made in studying various other combinatorial problems, especially the introductory ones appearing in \cite{StojnicMoreSophHopBnds10,StojnicLiftStrSec13}. Moreover, it often leads to a substantial practical improvement as well and one may say seemingly neutralizes the deficiencies of the methods of \cite{StojnicLqThrBnds10}.

We organize the rest of the paper in the following way. In Section
\ref{sec:secthr} we present the core of the mechanism and how it can be used to obtain the sectional thresholds for $\ell_q$ minimization. In Section \ref{sec:strthr} we will then present a neat modification of the mechanism so that it can handle the strong thresholds as well. In Section \ref{sec:weakthr} we present the weak thresholds results. In Section \ref{sec:conc} we discuss obtained results and provide several conclusions related to their importance.

\section{Lifting $\ell_q$-minimization sectional threshold}
\label{sec:secthr}

In this section we start assessing the performance of $\ell_q$ minimization by looking at its sectional thresholds. Essentially, we will present a mechanism that conceptually substantially improves on results from \cite{StojnicLqThrBnds10}. We will split the presentation into two main parts, the first one that deals with the basic results needed for our analysis and the second one that deals with the core arguments.

\subsection{Sectional threshold preliminaries}
\label{sec:secthrprelim}

Below we recall on a way to quantify behavior of $\beta_{sec}^{(q)}$. In doing so we will rely on some of the mechanisms presented in \cite{StojnicCSetam09,StojnicLqThrBnds10}. Along the same lines we will assume a substantial level of familiarity with many of the well-known results that relate to the performance characterization of (\ref{eq:l1}) as well as with those presented in \cite{StojnicLqThrBnds10} that relate to $\ell_q,0\leq q\leq 1$ (we will fairly often recall on many results/definitions that we established in \cite{StojnicCSetam09,StojnicLqThrBnds10}). We start by introducing a nice way of characterizing sectional success/failure of (\ref{eq:lq}).

\begin{theorem}(Nonzero part of $\x$ has fixed location)
Assume that an $m\times n$ matrix $A$ is given. Let $\tilde{X}_{sec}$ be the collection of all $k$-sparse vectors $\tilde{\x}$ in $R^n$ for which $\tilde{\x}_1=\tilde{\x}_2=\dots=\tilde{\x}_{n-k}=0$. Let $\tilde{\x}^{(i)}$ be any $k$-sparse vector from $\tilde{X}_{sec}$. Further, assume that $\y^{(i)}=A\tilde{\x}^{(i)}$ and that $\w$ is
an $n\times 1$ vector. If
\begin{equation}
(\forall \w\in \textbf{R}^n | A\w=0) \quad  \sum_{i=n-k+1}^n |\w_i|^q<\sum_{i=1}^{n-k}|\w_{i}|^q
\label{eq:thmeqgensec1}
\end{equation}
then the solution of (\ref{eq:lq}) for every pair $(\y^{(i)},A)$ is the corresponding $\tilde{\x}^{(i)}$.
\label{thm:thmgensec}
\end{theorem}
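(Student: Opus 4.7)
The plan is to carry out the standard null-space argument. Fix any $\tilde{\x}^{(i)} \in \tilde{X}_{sec}$, and let $\hat{\x}$ be an arbitrary feasible point of (\ref{eq:lq}), meaning $A\hat{\x} = \y^{(i)} = A\tilde{\x}^{(i)}$. Setting $\w := \hat{\x} - \tilde{\x}^{(i)}$ gives $A\w = \0$, so $\w$ is a null-space vector of $A$ and the hypothesis (\ref{eq:thmeqgensec1}) applies to it. The objective is then to conclude that $\|\hat{\x}\|_q^q > \|\tilde{\x}^{(i)}\|_q^q$ whenever $\w \neq \0$; this identifies $\tilde{\x}^{(i)}$ as the unique minimizer of (\ref{eq:lq}) and, since (\ref{eq:thmeqgensec1}) is a property of $A$ alone and makes no reference to $\tilde{\x}^{(i)}$, the same argument discharges every $\tilde{\x}^{(i)} \in \tilde{X}_{sec}$ at once.

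Next I would exploit the coordinate split $\{1,\dots,n-k\}\cup\{n-k+1,\dots,n\}$ that is built into the definition of $\tilde{X}_{sec}$. On the first block $\tilde{\x}^{(i)}$ vanishes, so $\hat{\x}_j = \w_j$ there; on the second block $\hat{\x}_j = \tilde{\x}^{(i)}_j + \w_j$. The analytic input is the elementary scalar subadditivity $|a+b|^q \leq |a|^q + |b|^q$, valid for all real $a,b$ and $q\in(0,1]$ (it follows from concavity of $t\mapsto t^q$ on $[0,\infty)$ combined with the usual triangle inequality). Rewriting it as $|\tilde{\x}^{(i)}_j|^q = |\hat{\x}_j - \w_j|^q \leq |\hat{\x}_j|^q + |\w_j|^q$ at each $j\in\{n-k+1,\dots,n\}$, summing, and combining with the identity on the first block yields
\begin{equation*}
\|\hat{\x}\|_q^q \;\geq\; \|\tilde{\x}^{(i)}\|_q^q \,+\, \sum_{j=1}^{n-k}|\w_j|^q \,-\, \sum_{j=n-k+1}^{n}|\w_j|^q.
\end{equation*}

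Finally, invoking (\ref{eq:thmeqgensec1}) with this particular $\w$ makes the bracketed difference strictly positive whenever $\w \neq \0$ (the hypothesis is implicitly quantified over nonzero null-space vectors, since the strict inequality fails at $\w=\0$). Hence every feasible $\hat{\x}\neq \tilde{\x}^{(i)}$ strictly loses in $\ell_q$-cost, so $\tilde{\x}^{(i)}$ is the unique minimizer of (\ref{eq:lq}) for the pair $(\y^{(i)},A)$, as claimed. I do not foresee a real obstacle in this argument; the only mildly delicate point is the $q=0$ corner, where $\|\cdot\|_q^q$ degenerates to the support-counting function and the pointwise scalar inequality must be replaced by a support-cardinality bound of the form $|\mathrm{supp}(\hat{\x})| \geq |\mathrm{supp}(\w)\cap\{1,\dots,n-k\}| + |\mathrm{supp}(\tilde{\x}^{(i)})| - |\mathrm{supp}(\w)\cap\{n-k+1,\dots,n\}|$. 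Inserting this in place of the $\ell_q$-cost comparison delivers the same conclusion at $q=0$.
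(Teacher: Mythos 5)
Your proof is correct: it is the standard null-space-property argument (write any feasible $\hat{\x}$ as $\tilde{\x}^{(i)}+\w$ with $A\w=0$, use the $q$-subadditivity $|a+b|^q\leq|a|^q+|b|^q$ on the support block and the identity on the zero block, then invoke the hypothesis for nonzero null-space vectors), and your separate support-counting treatment of the $q=0$ corner is also fine. Note that the paper does not prove Theorem \ref{thm:thmgensec} at all; it explicitly defers to the literature (see the remark following the theorem), and the argument given in those references is precisely the one you wrote, so there is nothing to reconcile.
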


\noindent \textbf{Remark:} As mentioned in \cite{StojnicLqThrBnds10}, this result is not really our own; more on similar or even the same results can be found in e.g. \cite{DH01,FN,LN,Y,XHapp,SPH,DTbern,GN03,GN04,GN07,FL08}.

We then, following the methodology of \cite{StojnicCSetam09,StojnicLqThrBnds10},
start by defining a set $\Ssec$
\begin{equation}
\Ssec=\{\w\in S^{n-1}| \quad \sum_{i=n-k+1}^n |\w_i|^q\geq \sum_{i=1}^{n-k}|\w_{i}|^q\},\label{eq:defSsec}
\end{equation}
where $S^{n-1}$ is the unit sphere in $R^n$. Then it was established in \cite{StojnicCSetam09} that the following optimization problem is of critical importance in determining the sectional threshold of $\ell_1$-minimization
\begin{equation}
\xi_{sec}=\min_{\w\in\Ssec}\|A\w\|_2,\label{eq:negham1}
\end{equation}
where $q=1$ in the definition of $\Ssec$ (the same will remain true for any $0\leq q\leq 1$).
Namely, what was established in \cite{StojnicCSetam09} is roughly the following: if $\xi_{sec}$ is positive with overwhelming probability for certain combination of $k$, $m$, and $n$ then for $\alpha=\frac{m}{n}$ one has a lower bound $\beta_{sec}=\frac{k}{n}$ on the true value of the sectional threshold with overwhelming probability. Also, the mechanisms of \cite{StojnicCSetam09} were powerful enough to establish the concentration of $\xi_{sec}$. This essentially means that if we can show that $E\xi_{sec}>0$ for certain $k$, $m$, and $n$ we can then obtain a lower bound on the sectional threshold. In fact, this is precisely what was done in \cite{StojnicCSetam09}. However, the results we obtained for the sectional threshold through such a consideration were not exact. The main reason of course was inability to determine $E\xi_{sec}$ exactly. Instead we resorted to its lower bounds and those turned out to be loose. In \cite{StojnicLiftStrSec13} we used some of the ideas we recently introduced in \cite{StojnicMoreSophHopBnds10} to provide a substantial conceptual improvement in these bounds which in turn reflected in a conceptual improvement of the sectional thresholds (and later on an even substantial practical improvement of all strong thresholds). When it comes to general $q$ we then in \cite{StojnicLqThrBnds10} adopted the strategy similar to the one employed in \cite{StojnicCSetam09}. Again, the results we obtained for the sectional threshold through such a consideration were not exact. The main reason of course was again an inability to determine $E\xi_{sec}$ exactly and essentially the lower bounds we resorted to again turned out to be loose. In this paper we will use some of the ideas from \cite{StojnicMoreSophHopBnds10,StojnicLiftStrSec13} to provide a substantial conceptual improvement in these bounds which in turn will reflect in a conceptual (and practical) improvement of the sectional thresholds.

Below we present a way to create a lower-bound on the optimal value of (\ref{eq:negham1}).

\subsection{Lower-bounding $\xi_{sec}$}
\label{sec:lbxisec}

In this section we will look at the problem from (\ref{eq:negham1}). As mentioned earlier, we will consider a statistical scenario and assume that the elements of $A$ are i.i.d. standard normal random variables. Such a scenario was considered in \cite{StojnicLiftStrSec13} as well and the following was done.
First we reformulated the problem in (\ref{eq:negham1}) in the following way
\begin{equation}
\xi_{sec}=\min_{\w\in\Ssec}\max_{\|\y\|_2=1}\y^TA\w.\label{eq:sqrtnegham2}
\end{equation}
Then using results of \cite{StojnicHopBnds10} we established a lemma very similar to the following one:
\begin{lemma}
Let $A$ be an $m\times n$ matrix with i.i.d. standard normal components. Let $\g$ and $\h$ be $n\times 1$ and $m\times 1$ vectors, respectively, with i.i.d. standard normal components. Also, let $g$ be a standard normal random variable and let $c_3$ be a positive constant. Then
\begin{equation}
E(\max_{\w\in\Ssec}\min_{\|\y\|_2=1}e^{-c_3(\y^T A\w + g)})\leq E(\max_{\w\in\Ssec}\min_{\|\y\|_2=1}e^{-c_3(\g^T\y+\h^T\w)}).\label{eq:negexplemma}
\end{equation}\label{lemma:negexplemma}
\end{lemma}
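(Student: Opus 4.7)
The plan is to prove this as an instance of Gordon's Gaussian min-max comparison (the same machinery applied in \cite{StojnicHopBnds10} for the $q=1$ case). The first step is a change of variables: since $e^{-c_3 t}$ is strictly decreasing in $t$, for any continuous function $h(\w,\y)$,
\begin{equation*}
\max_{\w\in\Ssec}\min_{\|\y\|_2=1} e^{-c_3 h(\w,\y)} \;=\; \exp\!\Big(-c_3\, \min_{\w\in\Ssec}\max_{\|\y\|_2=1} h(\w,\y)\Big).
\end{equation*}
Applying this identity with $h(\w,\y)=\y^T A\w + g$ on the left-hand side and $h(\w,\y)=\g^T\y+\h^T\w$ on the right-hand side reduces the claim to $E[e^{-c_3 M_X}] \le E[e^{-c_3 M_Y}]$, where $M_X := \min_{\w}\max_{\y}(\y^T A\w + g)$ and $M_Y := \min_{\w}\max_{\y}(\g^T\y+\h^T\w)$. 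Because $e^{-c_3 t}$ is decreasing, it suffices to establish the stochastic ordering $M_X \succeq_{\mathrm{st}} M_Y$, which is exactly the kind of conclusion Gordon's min-max comparison delivers.

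Next I would verify Gordon's covariance hypotheses for the two centered Gaussian processes $X_{\w,\y}=\y^T A\w + g$ and $Y_{\w,\y}=\g^T\y+\h^T\w$ indexed by $(\w,\y)\in\Ssec\times S^{m-1}$ (so $\|\w\|_2=\|\y\|_2=1$). A direct calculation gives equal variances $E[X_{\w,\y}^2]=\|\w\|_2^2\|\y\|_2^2+1=2=\|\y\|_2^2+\|\w\|_2^2=E[Y_{\w,\y}^2]$; the auxiliary variable $g$ is present precisely to equalize these. For fixed $\w$ and two choices $\y^{(1)},\y^{(2)}$, both processes share the same cross-covariance $(\y^{(1)})^T\y^{(2)}+1$, so the same-$\w$ increments $E[(X_{\w,\y^{(1)}}-X_{\w,\y^{(2)}})^2]$ and $E[(Y_{\w,\y^{(1)}}-Y_{\w,\y^{(2)}})^2]$ agree. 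For $\w^{(1)}\ne\w^{(2)}$, writing $a=(\w^{(1)})^T\w^{(2)}$ and $b=(\y^{(1)})^T\y^{(2)}$, one computes
\begin{equation*}
E\bigl[X_{\w^{(1)},\y^{(1)}} X_{\w^{(2)},\y^{(2)}}\bigr] - E\bigl[Y_{\w^{(1)},\y^{(1)}} Y_{\w^{(2)},\y^{(2)}}\bigr] \;=\; (ab+1)-(a+b) \;=\; (1-a)(1-b) \;\ge\; 0,
\end{equation*}
since $|a|,|b|\le 1$ on the unit sphere. Equivalently, the cross-$\w$ squared increments of $Y$ dominate those of $X$ --- exactly the sign pattern Gordon requires.

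With the covariance inequalities in place, Gordon's min-max comparison (with $\w$ as the outer index and $\y$ as the inner index) yields $P(M_X \ge \lambda) \ge P(M_Y \ge \lambda)$ for every real $\lambda$, i.e. $M_X \succeq_{\mathrm{st}} M_Y$; composing with the decreasing exponential then gives the claimed inequality. The one technical point is that $\Ssec$ and $S^{m-1}$ are continuous (though compact) index sets: I would justify the passage from the discrete Gordon theorem to this setting by first applying the finite-index Gordon inequality on $\epsilon$-nets $N_\epsilon^{\w}\subset\Ssec$ and $N_\epsilon^{\y}\subset S^{m-1}$, and then letting $\epsilon\downarrow 0$ using joint continuity of the two processes together with dominated convergence (the integrand is nonnegative and is dominated by an integrable envelope coming from standard Gaussian suprema bounds). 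The real content of the argument is the algebraic identity $(1-a)(1-b)\ge 0$ --- precisely where the unit-sphere constraint on both $\w$ and $\y$ enters --- the rest being standard Gordon bookkeeping.
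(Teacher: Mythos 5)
Your proposal is correct and follows exactly the route the paper intends: the paper's own "proof" simply defers to Gordon's comparison theorem (citing \cite{Gordon85} and the corresponding lemmas in \cite{StojnicHopBnds10,StojnicLiftStrSec13}), and your write-up supplies precisely those standard details --- the monotone change of variables, the variance equalization via $g$, the key covariance gap $(1-a)(1-b)\ge 0$ with the correct sign pattern for the outer/inner indices, and the discretization/limit bookkeeping. The only cosmetic point is that the lemma's stated dimensions of $\g$ and $\h$ are swapped relative to the pairing $\g^T\y+\h^T\w$ (a typo in the statement), which you implicitly and correctly resolve.
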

\begin{proof}
As mentioned in \cite{StojnicLiftStrSec13} (and earlier in \cite{StojnicHopBnds10}), the proof is a standard/direct application of a theorem from \cite{Gordon85}. We will omit the details since they are pretty much the same as the those in the proof of the corresponding lemmas in \cite{StojnicHopBnds10,StojnicLiftStrSec13}. However, we do mention that the only difference between this lemma and the ones in \cite{StojnicHopBnds10,StojnicLiftStrSec13} is in set $\Ssec$. What is here $\Ssec$ it is a hypercube subset of $S^{n-1}$ in the corresponding lemma in \cite{StojnicHopBnds10} and the same set $\Ssec$ with $q=1$ in \cite{StojnicLiftStrSec13}. However, such a difference would introduce no structural changes in the proof.
\end{proof}

Following step by step what was done after Lemma 3 in \cite{StojnicHopBnds10} one arrives at the following analogue of \cite{StojnicHopBnds10}'s equation $(57)$:
\begin{equation}
E(\min_{\w\in\Ssec}\|A\w\|_2)\geq
\frac{c_3}{2}-\frac{1}{c_3}\log(E(\max_{\w\in\Ssec}(e^{-c_3\h^T\w})))
-\frac{1}{c_3}\log(E(\min_{\|\y\|_2=1}(e^{-c_3\g^T\y}))).\label{eq:chneg8}
\end{equation}
Let $c_3=c_3^{(s)}\sqrt{n}$ where $c_3^{(s)}$ is a constant independent of $n$. Then (\ref{eq:chneg8}) becomes
\begin{eqnarray}
\hspace{-.5in}\frac{E(\min_{\w\in\Ssec}\|A\w\|_2)}{\sqrt{n}}
& \geq &
\frac{c_3^{(s)}}{2}-\frac{1}{nc_3^{(s)}}\log(E(\max_{\w\in\Ssec}(e^{-c_3^{(s)}\h^T\w})))
-\frac{1}{nc_3^{(s)}}\log(E(\min_{\|\y\|_2=1}(e^{-c_3^{(s)}\sqrt{n}\g^T\y})))\nonumber \\
& = &-(-\frac{c_3^{(s)}}{2}+I_{sec}(c_3^{(s)},\beta)+I_{sph}(c_3^{(s)},\alpha)),\label{eq:chneg9}
\end{eqnarray}
where
\begin{eqnarray}
I_{sec}(c_3^{(s)},\beta) & = & \frac{1}{nc_3^{(s)}}\log(E(\max_{\w\in\Ssec}(e^{-c_3^{(s)}\h^T\w})))\nonumber \\
I_{sph}(c_3^{(s)},\alpha) & = & \frac{1}{nc_3^{(s)}}\log(E(\min_{\|\y\|_2=1}(e^{-c_3^{(s)}\sqrt{n}\g^T\y}))).\label{eq:defIs}
\end{eqnarray}

One should now note that the above bound is effectively correct for any positive constant $c_3^{(s)}$. The only thing that is then left to be done so that the above bound becomes operational is to estimate $I_{sec}(c_3^{(s)},\beta)$ and $I_{sph}(c_3^{(s)},\alpha)$.

We start with $I_{sph}(c_3^{(s)},\alpha)$. Setting
\begin{equation}
\widehat{\gamma_{sph}^{(s)}}=\frac{2c_3^{(s)}-\sqrt{4(c_3^{(s)})^2+16\alpha}}{8},\label{eq:gamaiden3}
\end{equation}
and using results of \cite{StojnicLiftStrSec13} one has
\begin{equation}
I_{sph}(c_3^{(s)},\alpha)=\frac{1}{nc_3^{(s)}}\log(Ee^{-c_3^{(s)}\sqrt{n}\|\g\|_2})\doteq
\left ( \widehat{\gamma_{sph}^{(s)}}-\frac{\alpha}{2c_3^{(s)}}\log(1-\frac{c_3^{(s)}}{2\widehat{\gamma_{sph}^{(s)}}}\right ),\label{eq:Isph}
\end{equation}
where $\doteq$ stands for an equality in the limit $n\rightarrow\infty$.

We now switch to $I_{sec}(c_3^{(s)},\beta)$. Similarly to what was stated in \cite{StojnicLiftStrSec13}, pretty good estimates for this quantity can be obtained for any $n$. However, to facilitate the exposition we will focus only on the large $n$ scenario. Let $f(\w)=-\h^T\w$. In \cite{StojnicLqThrBnds10} the following was shown
\begin{equation}
\max_{\w\in\Ssec}f(\w)=-\min_{\w\in\Ssec} -\h^T\w
\leq\min_{\gamma_{sec}\geq 0,\nu_{sec}\geq 0} f_1(q,\h,\nu_{sec},\gamma_{sec},\beta)+\gamma_{sec},\label{eq:seceq1}
\end{equation}
where
\begin{equation}
f_1(q,\h,\nu_{sec},\gamma_{sec},\beta)=\max_{\w}\left (\sum_{i=n-k+1}^{n}(|\h_i||\w_i|+\nu_{sec}|\w_i|^q-\gamma_{sec}\w_i^2)
+\sum_{i=1}^{n-k}(|\h_i||\w_i|-\nu_{sec}|\w_i|^q-\gamma_{sec}\w_i^2)\right ).\label{eq:deff1}
\end{equation}
Then
\begin{multline}
\hspace{-.3in}I_{sec}(c_3^{(s)},\beta)  =  \frac{1}{nc_3^{(s)}}\log(E(\max_{\w\in\Ssec}(e^{-c_3^{(s)}\h^T\w}))) = \frac{1}{nc_3^{(s)}}\log(E(\max_{\w\in\Ssec}(e^{c_3^{(s)}f(\w))})))\\=\frac{1}{nc_3^{(s)}}\log(Ee^{c_3^{(s)}\sqrt{n}\min_{\gamma_{sec},\nu_{sec}\geq 0}(f_1(\h,\nu_{sec},\gamma_{sec},\beta)+\gamma_{sec})})
\doteq \frac{1}{nc_3^{(s)}}\min_{\gamma_{sec},\nu_{sec}\geq 0}\log(Ee^{c_3^{(s)}\sqrt{n}(f_1(q,\h,\nu_{sec},\gamma_{sec},\beta)+\gamma_{sec})})\\
=\min_{\gamma_{sec},\nu_{sec}\geq 0}(\frac{\gamma_{sec}}{\sqrt{n}}+\frac{1}{nc_3^{(s)}}\log(Ee^{c_3^{(s)}\sqrt{n}(f_1(q,\h,\nu_{sec},\gamma_{sec},\beta))})),\label{eq:gamaiden1sec}
\end{multline}
where, as earlier, $\doteq$ stands for equality when $n\rightarrow \infty$ and, as mentioned in \cite{StojnicLiftStrSec13}, would be obtained through the mechanism presented in \cite{SPH} (for our needs here though, even just replacing $\doteq$ with a simple $\leq$ inequality suffices). Now if one sets $\w_{i}=\frac{\w_{i}^{(s)}}{\sqrt{n}}$, $\gamma_{sec}=\gamma_{sec}^{(s)}\sqrt{n}$, and $\nu_{sec}=\nu_{sec}^{(s)}\sqrt{n}^{q-1}$ (where $\w_{i}^{(s)}$, $\gamma_{sec}^{(s)}$, and $\nu_{sec}^{(s)}$ are independent of $n$) then (\ref{eq:gamaiden1sec}) gives
\begin{multline}
I_{sec}(c_3^{(s)},\beta)
=\min_{\gamma_{sec},\nu_{sec}\geq 0}(\frac{\gamma_{sec}}{\sqrt{n}}+\frac{1}{nc_3^{(s)}}\log(Ee^{c_3^{(s)}\sqrt{n}(f_1(q,\h,\nu_{sec},\gamma_{sec},\beta))})\\
=\min_{\gamma_{sec}^{(s)},\nu_{sec}^{(s)}\geq 0}(\gamma_{sec}^{(s)}+\frac{\beta}{c_3^{(s)}}\log(Ee^{(c_3^{(s)}\max_{\w_i^{(s)}}(|\h_i||\w_i^{(s)}|+\nu_{sec}^{(s)}|\w_i^{(s)}|^q-\gamma_{sec}^{(s)}(\w_i^{(s)})^2))})
\\+\frac{1-\beta}{c_3^{(s)}}\log(Ee^{(c_3^{(s)}\max_{\w_j^{(s)}}(|\h_i||\w_j^{(s)}|-\nu_{sec}^{(s)}|\w_j^{(s)}|^q-\gamma_{sec}^{(s)}(\w_j^{(s)})^2))}))
=\min_{\gamma_{sec}^{(s)},\nu_{sec}^{(s)}\geq 0}(\gamma_{sec}^{(s)}+\frac{\beta}{c_3^{(s)}}\log(I_{sec}^{(1)})
+\frac{1-\beta}{c_3^{(s)}}\log(I_{sec}^{(2)})),\\\label{eq:gamaiden2sec}
\end{multline}
where
\begin{eqnarray}
I_{sec}^{(1)} & = & Ee^{(c_3^{(s)}\max_{\w_i^{(s)}}(|\h_i||\w_i^{(s)}|+\nu_{sec}^{(s)}|\w_i^{(s)}|^q-\gamma_{sec}^{(s)}(\w_i^{(s)})^2))}\nonumber \\
I_{sec}^{(2)} & = & Ee^{(c_3^{(s)}\max_{\w_j^{(s)}}(|\h_i||\w_j^{(s)}|-\nu_{sec}^{(s)}|\w_j^{(s)}|^q-\gamma_{sec}^{(s)}(\w_j^{(s)})^2))}.\label{eq:defI1I2sec}
\end{eqnarray}

We summarize the above results related to the sectional threshold ($\beta_{sec}^{(q)}$) in the following theorem.

\begin{theorem}(Sectional threshold - lifted lower bound)
Let $A$ be an $m\times n$ measurement matrix in (\ref{eq:system})
with i.i.d. standard normal components. Let $\tilde{X}_{sec}$ be the collection of all $k$-sparse vectors $\tilde{\x}$ in $R^n$ for which $\tilde{\x}_1=0,\tilde{\x}_2=0,,\dots,\tilde{\x}_{n-k}=0$. Let $\tilde{\x}^{(i)}$ be any $k$-sparse vector from $\tilde{X}_{sec}$. Further, assume that $\y^{(i)}=A\tilde{\x}^{(i)}$. Let $k,m,n$ be large
and let $\alpha=\frac{m}{n}$ and $\betasec^{(q)}=\frac{k}{n}$ be constants
independent of $m$ and $n$. Let $c_3^{(s)}$ be a positive constant and set
\begin{equation}
\widehat{\gamma_{sph}^{(s)}}=\frac{2c_3^{(s)}-\sqrt{4(c_3^{(s)})^2+16\alpha}}{8},\label{eq:gamaiden3thm}
\end{equation}
and
\begin{equation}
I_{sph}(c_3^{(s)},\alpha)=
\left ( \widehat{\gamma_{sph}^{(s)}}-\frac{\alpha}{2c_3^{(s)}}\log(1-\frac{c_3^{(s)}}{2\widehat{\gamma_{sph}^{(s)}}}\right ).\label{eq:Isphthm}
\end{equation}
Further let
\begin{eqnarray}
I_{sec}^{(1)} & = & Ee^{c_3^{(s)}\max_{\w_i}(|\h_i||\w_i^{(s)}|+\nu_{sec}^{(s)}|\w_i^{(s)}|^q-\gamma_{sec}^{(s)}(\w_i^{(s)})^2)}\nonumber \\
I_{sec}^{(2)} & = & Ee^{c_3^{(s)}\max_{\w_j}(|\h_j||\w_j^{(s)}|-\nu_{sec}^{(s)}|\w_j^{(s)}|^q-\gamma_{sec}^{(s)}(\w_j^{(s)})^2)}.\label{eq:defI1I2secthm}
\end{eqnarray}
and
\begin{equation}
I_{sec}(c_3^{(s)},\betasec^{(q)})=\min_{\gamma_{sec}^{(s)},\nu_{sec}^{(s)}\geq 0}(\gamma_{sec}^{(s)}+\frac{\betasec^{(q)}}{c_3^{(s)}}\log(I_{sec}^{(1)})
+\frac{1-\betasec^{(q)}}{c_3^{(s)}}\log(I_{sec}^{(2)})).\label{eq:seccondthmsec}
\end{equation}
If $\alpha$ and $\betasec^{(q)}$ are such that
\begin{equation}
\min_{c_3^{(s)}}(-\frac{c_3^{(s)}}{2}+I_{sec}(c_3^{(s)},\betasec^{(q)})+I_{sph}(c_3^{(s)},\alpha))<0,\label{eq:seccondthmsec}
\end{equation}
then with overwhelming probability the solution of (\ref{eq:lq}) for every pair $(\y^{(i)},A)$ is the corresponding $\tilde{\x}^{(i)}$.\label{thm:thmsecthrlq}
\end{theorem}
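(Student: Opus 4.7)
The plan is to chain three standard ingredients in this framework into a single proof: a deterministic reduction via Theorem \ref{thm:thmgensec} to the positivity of $\xi_{sec}$, a Gordon-type comparison inequality (Lemma \ref{lemma:negexplemma}) that lower-bounds $E\xi_{sec}/\sqrt{n}$ by the expression derived in (\ref{eq:chneg9}), and Gaussian concentration that upgrades the expectation bound to an overwhelming-probability statement. The hypothesis (\ref{eq:seccondthmsec}) is designed to be exactly what forces the right-hand side of (\ref{eq:chneg9}) to be strictly positive after minimization over the free parameter $c_3^{(s)}$.

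First I would verify the deterministic reduction. If $\xi_{sec}=\min_{\w\in\Ssec}\|A\w\|_2>0$, then no unit vector $\w\in S^{n-1}$ satisfying $\sum_{i=n-k+1}^n|\w_i|^q\geq\sum_{i=1}^{n-k}|\w_i|^q$ lies in $\ker(A)$; by scale-invariance of both the null space and the defining inequality, this rules out every nonzero null-space vector meeting the strict version of (\ref{eq:thmeqgensec1}) (the non-strict-to-strict upgrade is automatic for i.i.d.\ Gaussian $A$, since the boundary of $\Ssec$ has measure zero in $S^{n-1}$). Theorem \ref{thm:thmgensec} then delivers recovery of every $\tilde{\x}^{(i)}$. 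For concentration, the map $A\mapsto\xi_{sec}(A)$ is $1$-Lipschitz in the Frobenius norm (each $\w\in\Ssec\subset S^{n-1}$ has $\|\w\|_2=1$), so Gaussian concentration for Lipschitz functionals gives fluctuations at scale $O(1)$ around the mean; hence a bound $E\xi_{sec}\geq\varepsilon\sqrt{n}$ with $\varepsilon>0$ immediately yields $\xi_{sec}>0$ with overwhelming probability.

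Next I would execute the probabilistic step. Starting from the minimax form (\ref{eq:sqrtnegham2}), Lemma \ref{lemma:negexplemma} replaces the joint Gaussian coupling $A\w$ by the decoupled pair $(\g,\h)$ after passing through the negative exponential $e^{-c_3(\cdot)}$. Taking logarithms, applying Jensen's inequality, and using the independence of $\g$ and $\h$ reproduces the inequality (\ref{eq:chneg8}); the rescaling $c_3=c_3^{(s)}\sqrt{n}$ then yields (\ref{eq:chneg9}) in terms of $I_{sph}$ and $I_{sec}$. The spherical piece is essentially the MGF of $\|\g\|_2$: a Laplace-type saddle-point evaluation in the auxiliary parameter $\widehat{\gamma_{sph}^{(s)}}$ chosen as in (\ref{eq:gamaiden3thm}) delivers the closed form (\ref{eq:Isphthm}), following the computation already carried out in \cite{StojnicLiftStrSec13}.

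The main obstacle is the evaluation of $I_{sec}$, since $\Ssec$ is defined by a non-convex $\ell_q$ constraint on the sphere. To handle it I would dualize both constraints, introducing a multiplier $\nu_{sec}\geq 0$ for the $\ell_q$ inequality defining $\Ssec$ and a multiplier $\gamma_{sec}\geq 0$ for the unit-sphere constraint, which produces the Lagrangian upper bound (\ref{eq:seceq1})--(\ref{eq:deff1}). The resulting function $f_1$ is separable across the coordinates of $\w$, which is the crucial point: under the rescaling $\w_i=\w_i^{(s)}/\sqrt{n}$, $\gamma_{sec}=\gamma_{sec}^{(s)}\sqrt{n}$, $\nu_{sec}=\nu_{sec}^{(s)}n^{(q-1)/2}$, the i.i.d.\ structure of $\h$ factorizes the exponential expectation into $n$ one-dimensional MGFs, giving $k$ copies of $I_{sec}^{(1)}$ and $n-k$ copies of $I_{sec}^{(2)}$ and hence (\ref{eq:gamaiden2sec}). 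Combining this with the formula for $I_{sph}$, minimizing the right-hand side of (\ref{eq:chneg9}) over $c_3^{(s)}>0$, and invoking the hypothesis (\ref{eq:seccondthmsec}) gives a constant positive lower bound on $E\xi_{sec}/\sqrt{n}$; the concentration argument from the second paragraph then closes the proof.
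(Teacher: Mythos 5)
Your proposal is correct and follows essentially the same route as the paper: the deterministic characterization via Theorem \ref{thm:thmgensec} and the set $\Ssec$, positivity of $\xi_{sec}$ established through its expectation plus concentration (as in \cite{StojnicCSetam09}), the Gordon-type comparison of Lemma \ref{lemma:negexplemma} leading to (\ref{eq:chneg8})--(\ref{eq:chneg9}), the closed-form evaluation of $I_{sph}$ from \cite{StojnicLiftStrSec13}, and the Lagrangian/separable treatment of $I_{sec}$ giving (\ref{eq:gamaiden2sec}). One small remark: the measure-zero aside about the boundary of $\Ssec$ is unnecessary (and not really justified as stated), since $\xi_{sec}>0$ already forces every nonzero null-space vector out of $\Ssec$, which after normalization is exactly the strict inequality (\ref{eq:thmeqgensec1}).
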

\begin{proof}
Follows from the above discussion.
\end{proof}

One also has immediately the following corollary.

\begin{corollary}(Sectional threshold - lower bound \cite{StojnicLqThrBnds10})
Let $A$ be an $m\times n$ measurement matrix in (\ref{eq:system})
with i.i.d. standard normal components. Let $\tilde{X}_{sec}$ be the collection of all $k$-sparse vectors $\tilde{\x}$ in $R^n$ for which $\tilde{\x}_1=0,\tilde{\x}_2=0,,\dots,\tilde{\x}_{n-k}=0$. Let $\tilde{\x}^{(i)}$ be any $k$-sparse vector from $\tilde{X}_{sec}$. Further, assume that $\y^{(i)}=A\tilde{\x}^{(i)}$. Let $k,m,n$ be large
and let $\alpha=\frac{m}{n}$ and $\betasec^{(q)}=\frac{k}{n}$ be constants
independent of $m$ and $n$. Let
\begin{equation}
I_{sph}(\alpha)=
-\sqrt{\alpha}.\label{eq:Isphcor}
\end{equation}
Further let
\begin{eqnarray}
I_{sec}^{(1)} & = & E\max_{\w_i}(|\h_i||\w_i^{(s)}|+\nu_{sec}^{(s)}|\w_i^{(s)}|^q-\gamma_{sec}^{(s)}(\w_i^{(s)})^2)\nonumber \\
I_{sec}^{(2)} & = & E\max_{\w_j}(|\h_j||\w_j^{(s)}|-\nu_{sec}^{(s)}|\w_j^{(s)}|^q-\gamma_{sec}^{(s)}(\w_j^{(s)})^2).\label{eq:defI1I2seccor}
\end{eqnarray}
and
\begin{equation}
I_{sec}(\betasec^{(q)})=\min_{\gamma_{sec}^{(s)},\nu_{sec}^{(s)}\geq 0}(\gamma_{sec}^{(s)}+\betasec^{(q)}I_{sec}^{(1)}
+(1-\betasec^{(q)})I_{sec}^{(2)}).\label{eq:seccondcorsec}
\end{equation}
If $\alpha$ and $\betasec^{(q)}$ are such that
\begin{equation}
I_{sec}(\betasec^{(q)})+I_{sph}(\alpha)<0,\label{eq:seccondcorsec}
\end{equation}
then with overwhelming probability the solution of (\ref{eq:lq}) for every pair $(\y^{(i)},A)$ is the corresponding $\tilde{\x}^{(i)}$.\label{cor:corsecthrlq}
\end{corollary}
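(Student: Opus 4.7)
The plan is to deduce Corollary~\ref{cor:corsecthrlq} from Theorem~\ref{thm:thmsecthrlq} by sending $c_3^{(s)}\to 0^{+}$ in the latter. Since $\min_{c_3^{(s)}>0}(\,\cdot\,)\leq \lim_{c_3^{(s)}\to 0^{+}}(\,\cdot\,)$, it is enough to establish the pointwise limit
\begin{equation*}
\lim_{c_3^{(s)}\to 0^{+}}\!\left(-\tfrac{c_3^{(s)}}{2}+I_{sec}(c_3^{(s)},\betasec^{(q)})+I_{sph}(c_3^{(s)},\alpha)\right)= I_{sec}(\betasec^{(q)})+I_{sph}(\alpha),
\end{equation*}
with the right-hand side matching the quantities in the corollary. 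Granting this, whenever the corollary's hypothesis $I_{sec}(\betasec^{(q)})+I_{sph}(\alpha)<0$ holds, the theorem's minimand is eventually strictly negative for $c_3^{(s)}$ near $0$, so its infimum is strictly negative and Theorem~\ref{thm:thmsecthrlq} supplies the desired recovery guarantee.

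For the spherical term I would Taylor-expand (\ref{eq:gamaiden3thm}) to obtain $\widehat{\gamma_{sph}^{(s)}}=-\tfrac{\sqrt{\alpha}}{2}+O(c_3^{(s)})$, so that $c_3^{(s)}/(2\widehat{\gamma_{sph}^{(s)}})\to 0$. Using $\log(1-x)=-x+O(x^{2})$ in (\ref{eq:Isphthm}) then yields
\begin{equation*}
I_{sph}(c_3^{(s)},\alpha)=\widehat{\gamma_{sph}^{(s)}}+\frac{\alpha}{4\widehat{\gamma_{sph}^{(s)}}}+O(c_3^{(s)})\longrightarrow -\tfrac{\sqrt{\alpha}}{2}-\tfrac{\sqrt{\alpha}}{2}=-\sqrt{\alpha},
\end{equation*}
matching $I_{sph}(\alpha)$. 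For the sectional term, for fixed $\gamma_{sec}^{(s)}>0$ and $\nu_{sec}^{(s)}\geq 0$, the quadratic penalty $-\gamma_{sec}^{(s)}(\w_j^{(s)})^2$ dominates the linear and $|\w_j^{(s)}|^q$ contributions, so the random variable $X_j$ inside the exponent in (\ref{eq:defI1I2secthm}) satisfies $X_j\leq |\h_j|^{2}/(4\gamma_{sec}^{(s)})+O(1)$ and hence has an MGF analytic near zero. The standard cumulant expansion $\tfrac{1}{c}\log Ee^{cX_j}=EX_j+O(c)$ then gives $\tfrac{1}{c_3^{(s)}}\log I_{sec}^{(j)}\to EX_j$. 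Interchanging this pointwise limit with the outer minimization over $(\gamma_{sec}^{(s)},\nu_{sec}^{(s)})$ identifies $\lim_{c_3^{(s)}\to 0^{+}} I_{sec}(c_3^{(s)},\betasec^{(q)})$ with the corollary's $I_{sec}(\betasec^{(q)})$, completing the three limits.

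The step I expect to be the main obstacle is the exchange of the $c_3^{(s)}\to 0^{+}$ limit with the infimum over $(\gamma_{sec}^{(s)},\nu_{sec}^{(s)})$. The lower bound $\liminf_{c_3^{(s)}\to 0^{+}} I_{sec}(c_3^{(s)},\betasec^{(q)})\geq I_{sec}(\betasec^{(q)})$ is immediate from Jensen's inequality $\tfrac{1}{c}\log Ee^{cX}\geq EX$, so only the matching upper bound is nontrivial. I would obtain it by restricting to a compact parameter region: $\gamma_{sec}^{(s)}\downarrow 0$ makes $I_{sec}^{(j)}$ diverge because the inner maximum blows up, while $\nu_{sec}^{(s)}\to\infty$ is plainly suboptimal since it penalizes $I_{sec}^{(1)}$ strongly without offsetting gain in $I_{sec}^{(2)}$. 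Any near-optimal sequence is therefore bounded, and on such a compact set the Taylor expansion used above is uniform; dominated convergence then closes the argument, and the remainder is routine bookkeeping.
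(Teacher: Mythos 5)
Your proposal is correct and follows exactly the route the paper takes: the paper's own proof of Corollary~\ref{cor:corsecthrlq} is the one-line statement that it follows from Theorem~\ref{thm:thmsecthrlq} by letting $c_3^{(s)}\rightarrow 0$, which is precisely your argument. Your additional details (the expansion of $\widehat{\gamma_{sph}^{(s)}}$ giving $I_{sph}\rightarrow-\sqrt{\alpha}$, the cumulant expansion $\frac{1}{c}\log Ee^{cX}\rightarrow EX$, and the care taken with exchanging the limit and the minimization) simply fill in steps the paper leaves implicit.
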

\begin{proof}
Follows from the above theorem by taking $c_3^{(s)}\rightarrow 0$.
\end{proof}

The results for the sectional threshold obtained from the above theorem
are presented in Figure \ref{fig:sec}. To be a bit more specific, we selected four different values of $q$, namely $q\in\{0,0.1,0.3,0.5\}$ in addition to standard $q=1$ case already discussed in \cite{StojnicCSetam09}. Also, we present in Figure \ref{fig:sec} the results one can get from Theorem \ref{thm:thmsecthrlq} when $c_3^{(s)}\rightarrow 0$ (i.e. from Corollary \ref{cor:corsecthrlq}, see e.g. \cite{StojnicLqThrBnds10}).
\begin{figure}[htb]
\begin{minipage}[b]{.5\linewidth}
\centering
\centerline{\epsfig{figure=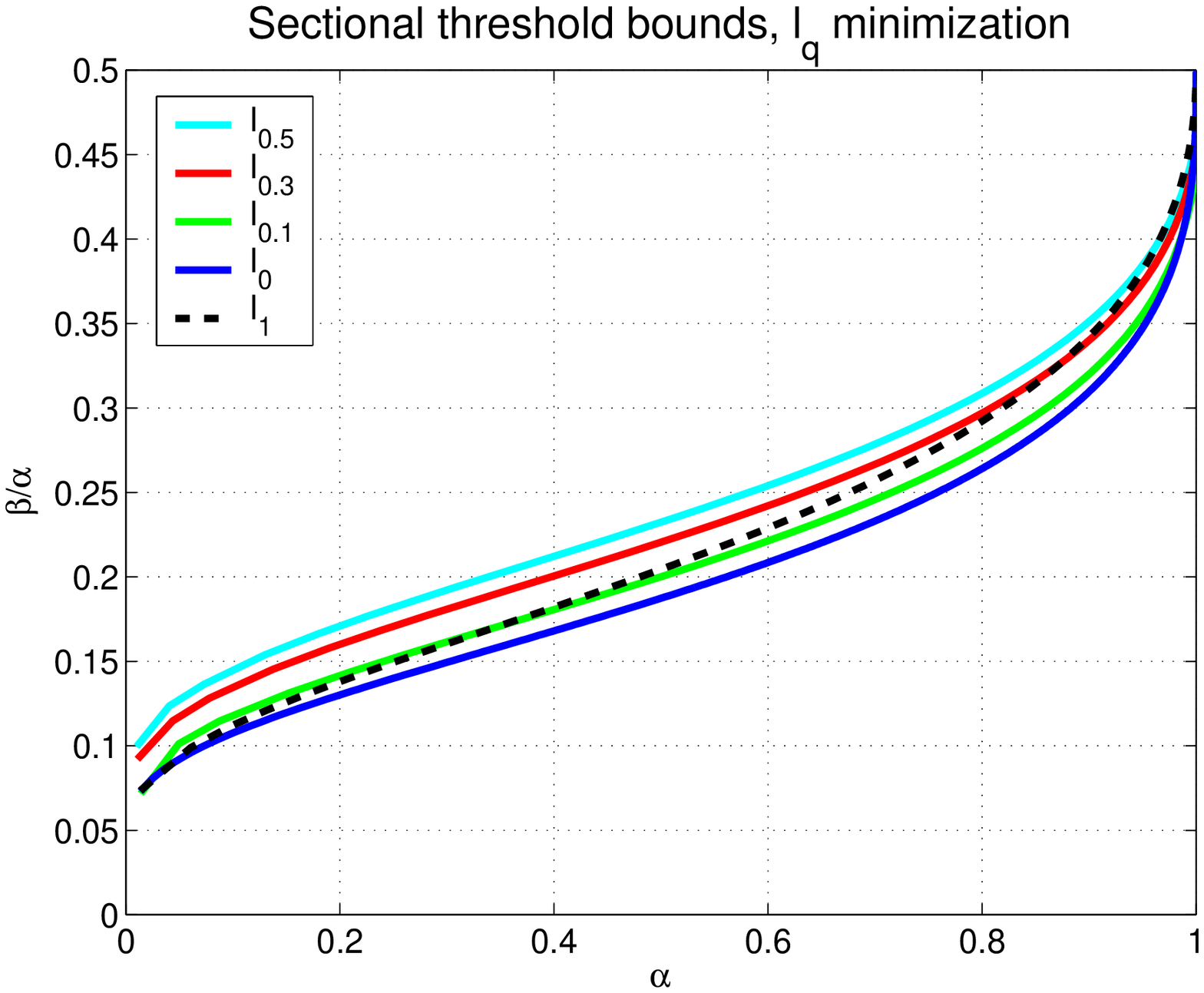,width=8cm,height=6.5cm}}
\end{minipage}
\begin{minipage}[b]{.5\linewidth}
\centering
\centerline{\epsfig{figure=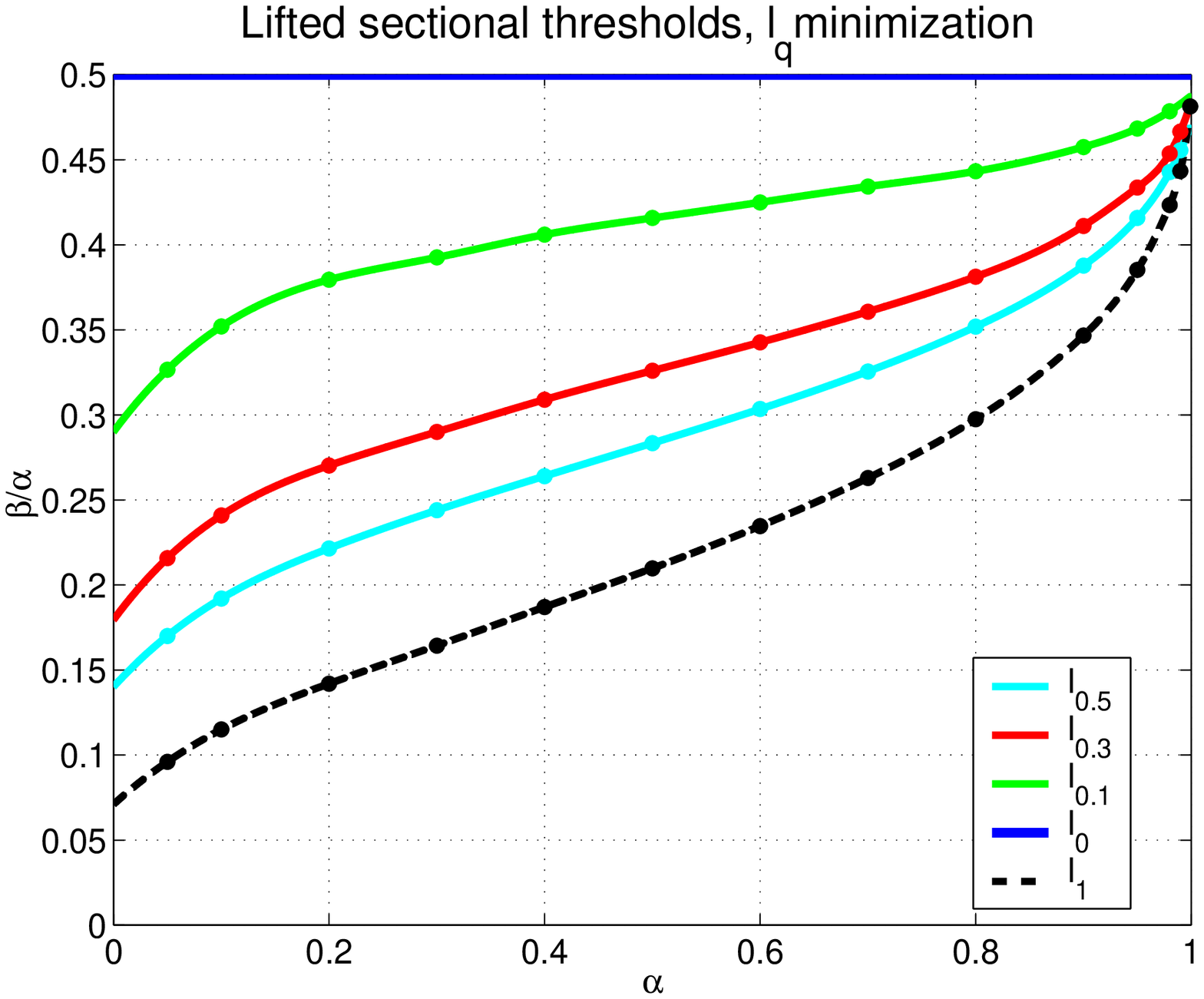,width=8cm,height=6.5cm}}
\end{minipage}
\caption{\emph{Sectional} thresholds, $\ell_q$ optimization; a) left -- $c_3\rightarrow 0$; b) right -- optimized $c_3$}
\label{fig:sec}
\end{figure}

As can be seen from Figure \ref{fig:sec}, the results for selected values of $q$ are better than for $q=1$. Also the results improve on those presented in \cite{StojnicLqThrBnds10} and essentially obtained based on Corollary \ref{cor:corsecthrlq}, i.e. Theorem \ref{thm:thmsecthrlq} for $c_3^{(s)}\rightarrow 0$.

Also, we should preface all of our discussion of presented results by emphasizing that all results are obtained after numerical computations. These are on occasion quite involved and could be imprecise. When viewed in that way one should take the results presented in Figure \ref{fig:sec} more as an illustration rather than as an exact plot of the achievable thresholds. Obtaining the presented results included several numerical optimizations which were all (except maximization over $\w$) done on a local optimum level. We do not know how (if in any way) solving them on a global optimum level would affect the location of the plotted curves. Also, additional numerical integrations were done on a finite precision level which could have potentially harmed the final results as well. Still, we believe that the methodology can not achieve substantially more than what we presented in Figure \ref{fig:sec} (and hopefully is not severely degraded with numerical integrations and maximization over $\w$). Of course, we do reemphasize that the results presented in the above theorem are completely rigorous,
it is just that some of the numerical work that we performed could have been a bit imprecise
(we firmly believe that this is not the case; however with finite numerical precision one has to be cautious all the time).

\subsection{Special case}
\label{sec:secthrspecial}

In this subsection we look at a couple of special cases that can be solved more explicitly.

\subsubsection{$q\rightarrow 0$}
\label{sec:secthrspecialq0}

We will consider case $q\rightarrow 0$. There are many methods how this particular case can be handled. Rather than obtaining the exact threshold results (which for this case is not that hard anyway), our goal here is to see what kind of performance would the methodology presented above give in this case.

We will therefore closely follow the methodology introduced above. However, we will modify certain aspects of it. To that end we start by introducing set $\Ssec^{(0)}$
\begin{equation}
\Ssec^{(0)}=\{\w^{(0)}\in S^{n-1}| \w_i^{(0)}=\w_i,n-k+1\leq i\leq n;\w_i^{(0)}=\b_i\w_i,1\leq i\leq n-k,\sum_{i=1}^{n-k}\b_i=k;\sum_{i=1}^{n}\w_i^2=1\}.\label{eq:defSsec0}
\end{equation}
It is not that hard to see that when $q\rightarrow 0$ the above set can be used to characterize sectional failure of $\ell_0$ optimization in a manner similar to the one set $\Ssec$ was used earlier to characterize sectional failure of $\ell_q$ for a general $q$. Let $f(\w^{(0)})=\h^T\w^{(0)}$ and
we start with the following line of identities
\begin{multline}
\hspace{-.5in}\max_{\w^{(0)}\in\Ssec^{(0)}}f(\w^{(0)})=-\min_{\w^{(0)}\in\Ssec^{(0)}} -\h^T\w^{(0)}\\=-\min_{\w}\max_{\gamma_{sec}\geq 0,\nu_{sec}^{(0)}\geq 0}
-\sum_{i=n-k+1}^{n}\h_i\w_i-\sum_{i=1}^{n-k}\h_i\b_i\w_i
+\nu_{sec}^{(0)}\sum_{i=1}^{n-k}\b_i
-\nu_{sec}^{(0)}k+\gamma_{sec}\sum_{i=1}^{n}\w_i^2-\gamma_{sec}\\
\leq-\max_{\gamma_{sec}\geq 0,\nu_{sec}^{(0)}\geq 0}\min_{\w}
-\sum_{i=n-k+1}^{n}\h_i\w_i-\sum_{i=1}^{n-k}\h_i\b_i\w_i
+\nu_{sec}^{(0)}\sum_{i=1}^{n-k}\b_i
-\nu_{sec}^{(0)}k+\gamma_{sec}\sum_{i=1}^{n}\w_i^2-\gamma_{sec}\\
=\min_{\gamma_{sec}\geq 0,\nu_{sec}^{(0)}\geq 0}\max_{\w}
\sum_{i=n-k+1}^{n}\h_i\w_i-\sum_{i=1}^{n-k}\h_i\b_i\w_i
-\nu_{sec}^{(0)}\sum_{i=1}^{n-k}\b_i
+\nu_{sec}^{(0)}k-\gamma_{sec}\sum_{i=1}^{n}\w_i^2+\gamma_{sec}\\
\min_{\gamma_{sec}\geq 0,\nu_{sec}^{(0)}\geq 0}
\sum_{i=n-k+1}^{n}\frac{\h_i^2}{4\gamma_{sec}}+\sum_{i=1}^{n-k}\max\{\frac{\h_i^2}{4\gamma_{sec}}-\nu_{sec}^{(0)},0\}
+\nu_{sec}^{(0)}k+\gamma_{sec}\\
=\min_{\gamma_{sec}\geq 0,\nu_{sec}^{(0)}\geq 0} f_1^{(0)}(\h,\nu_{sec},\gamma_{sec},\beta)+\gamma_{sec},\label{eq:seceq1q0}
\end{multline}
where
\begin{equation}
f_1^{(0)}(\h,\nu_{sec},\gamma_{sec},\beta)=\left (\sum_{i=n-k+1}^{n}\frac{\h_i^2}{4\gamma_{sec}}+\sum_{i=1}^{n-k}\max\{\frac{\h_i^2}{4\gamma_{sec}}-\nu_{sec}^{(0)},0\}+\nu_{sec}^{(0)}k\right ).\label{eq:deff1q0}
\end{equation}
Now one can write analogously to (\ref{eq:gamaiden1sec})
\begin{equation}
I_{sec}^{(0)}(c_3^{(s)},\beta)
\doteq \min_{\gamma_{sec},\nu_{sec}\geq 0}(\frac{\gamma_{sec}}{\sqrt{n}}+\frac{1}{nc_3^{(s)}}\log(Ee^{c_3^{(s)}\sqrt{n}(f_1^{(0)}(\h,\nu_{sec},\gamma_{sec},\beta))})).\label{eq:gamaiden1secq0}
\end{equation}
After further introducing $\gamma_{sec}=\gamma_{sec}^{(s)}\sqrt{n}$, and $\nu_{sec}^{(0)}=\nu_{sec}^{(0,s)}\sqrt{n}^{-1}$ (where $\gamma_{sec}^{(s)}$, and $\nu_{sec}^{(0,s)}$ are independent of $n$) one can write analogously to (\ref{eq:gamaiden2sec})
\begin{multline}
I_{sec}^{(0)}(c_3^{(s)},\beta)
\doteq\min_{\gamma_{sec},\nu_{sec}\geq 0}(\frac{\gamma_{sec}}{\sqrt{n}}+\frac{1}{nc_3^{(s)}}\log(Ee^{c_3^{(s)}\sqrt{n}(f_1^{(0)}(\h,\nu_{sec},\gamma_{sec},\beta))})\\
=\min_{\gamma_{sec}^{(s)},\nu_{sec}^{(0,s)}\geq 0}(\gamma_{sec}^{(s)}+\beta\nu_{sec}^{(0,s)}+\frac{\beta}{c_3^{(s)}}\log(Ee^{(c_3^{(s)}\frac{\h_i^2}{\gamma_{sec}^{(s)}})})
+\frac{1-\beta}{c_3^{(s)}}\log(Ee^{(c_3^{(s)}\max\{\frac{\h_i^2}{\gamma_{sec}^{(s)}}-\nu_{sec}^{(0,s)},0\})}))\\
=\min_{\gamma_{sec}^{(s)},\nu_{sec}^{(s)}\geq 0}(\gamma_{sec}^{(s)}+\beta\nu_{sec}^{(0,s)}+\frac{\beta}{c_3^{(s)}}\log(I_{sec}^{(0,1)})
+\frac{1-\beta}{c_3^{(s)}}\log(I_{sec}^{(0,2)})),\\\label{eq:gamaiden2secq0}
\end{multline}
where
\begin{eqnarray}
I_{sec}^{(0,1)} & = & Ee^{(c_3^{(s)}\frac{\h_i^2}{\gamma_{sec}^{(s)}})}\nonumber \\
I_{sec}^{(0,2)} & = & Ee^{(c_3^{(s)}\max\{\frac{\h_i^2}{\gamma_{sec}^{(s)}}-\nu_{sec}^{(0,s)},0\})}.\label{eq:defI1I2secq0}
\end{eqnarray}
One can then write analogously to (\ref{eq:seccondthmsec})
\begin{equation}
\min_{c_3^{(s)}}(-\frac{c_3^{(s)}}{2}+I_{sec}^{(0)}(c_3^{(s)},\beta)+I_{sph}(c_3^{(s)},\alpha))<0.\label{eq:seccondthmsecq0}
\end{equation}
Setting $b=\frac{c_3^{(s)}}{4\gamma_{sec}}$, $\nu_{sec}^{(0,s,\gamma)}=4\gamma_{sec}\nu_{sec}^{(0,s)}$, and solving the integrals one from (\ref{eq:seccondthmsecq0}) has the following condition for $\beta$ and $\alpha$
\begin{multline}
-\beta\frac{1}{2c_3^{(s)}}\log\left (\frac{\alpha}{(c_3^{(s)})^2}\right )+\frac{b\nu_{sec}^{(0,s,\gamma)}\beta}{c_3^{(s)}}+c_3^{(s)}\frac{1-2b}{4b}\\+\frac{1}{c_3^{(s)}}\log \left (\frac{e^{-b\nu_{sec}^{(0,s,\gamma)}}}{\sqrt{1-2b}}\mbox{erfc}\left (\sqrt{\frac{1-2b}{2}\nu_{sec}^{(0,s,\gamma)}}\right )
+\mbox{erf}\left (\sqrt{\frac{\nu_{sec}^{(0,s,\gamma)}}{2}}\right )\right )+I_{sph}(c_3^{(s)},\alpha)<0.\label{q0condIsph1sec}
\end{multline}
Now, assuming $c_3^{(s)}$ is large one has
\begin{equation}
I_{sph}(c_3^{(s)},\alpha)\approx -\frac{\alpha}{2c_3^{(s)}}-\frac{\alpha}{2c_3^{(s)}}\log(1+\frac{(c_3^{(s)})^2}{\alpha}).\label{q0condIsph2sec}
\end{equation}
Setting
\begin{eqnarray}
1-2b & = & \frac{\alpha}{(c_3^{(s)})^2}\nonumber \\
\nu_{sec}^{(0,s,\gamma)} & = & \log(\frac{(c_3^{(s)})^2}{\alpha}),\label{eq:q0condsec}
\end{eqnarray}
one from (\ref{q0condIsph1sec}) and (\ref{q0condIsph2sec}) has
\begin{multline}
-\beta\frac{1}{2c_3^{(s)}}\log\left (\frac{\alpha}{(c_3^{(s)})^2}\right )+\frac{b\nu_{sec}^{(0,s,\gamma)}\beta}{c_3^{(s)}}+c_3^{(s)}\frac{1-2b}{4b}\\+\frac{1}{c_3^{(s)}}\log \left (\frac{e^{-b\nu_{sec}^{(0,s,\gamma)}}}{\sqrt{1-2b}}\mbox{erfc}\left (\sqrt{\frac{1-2b}{2}\nu_{sec}^{(0,s,\gamma)}}\right )
+\mbox{erf}\left (\sqrt{\frac{\nu_{sec}^{(0,s,\gamma)}}{2}}\right )\right )+I_{sph}(c_3^{(s)},\alpha)
= O\left (\frac{(2\beta-\alpha)\log(c_3^{(s)})}{c_3^{(s)}}\right ).\\\label{q0condIsph3sec}
\end{multline}
Then from (\ref{q0condIsph3sec}) one has that as long as $\beta_{sec}^{(0)}<\frac{\alpha}{2}-\epsilon_{\ell_0}$, where $\epsilon_{\ell_0}$ is a small positive constant (adjusted with respect to $c_{3}^{(s)}$) (\ref{eq:seccondthmsecq0}) holds which is, as stated above, an analogue to the condition given in (\ref{eq:seccondthmsec}) in Theorem \ref{thm:thmsecthrlq}. This essentially means that when $q\rightarrow 0$ one has that the threshold curve approaches the best possible curve $\frac{\beta}{\alpha}=\frac{1}{2}$. While, as we stated at the beginning of this subsection, this particular fact can be shown in many different ways, the way we chose to present additionally shows that the methodology of this paper is actually capable of achieving the theoretically best possible threshold curve. Of course that does not necessarily mean that the same would be true for any $q>0$. However, it may serve as an indicator that maybe even for other values of $q$ it does achieve the values that are somewhat close to the true thresholds. In that light one can believe a bit more in the numerical results we presented earlier for various different $q$'s. Of course, one still has to be careful. Namely, while we have solid indicators that the methodology is quite powerful all of what we just discussed still does not necessarily imply that the numerical results we presented earlier are completely exact. It essentially just shows that it may make sense that they provide substantially better performance guarantees than the corresponding ones obtained in Corollary \ref{cor:corstrthrlq} (and earlier in \cite{StojnicLqThrBnds10}) for $c_3^{(s)}\rightarrow 0$.

\subsubsection{$q=\frac{1}{2}$}
\label{sec:secthrspecialq05}

Another special case that allows a further simplification of the results presented in Theorem \ref{thm:thmsecthrlq} is when $q=\frac{1}{2}$. As discussed in \cite{StojnicLqThrBnds10}, when $q=\frac{1}{2}$ one can also be more explicit when it comes to the optimization over $\w$. Namely, taking simply the derivatives one finds
\begin{equation*}
|\h_i|\pm q\nu_{str}^{(s)}|\w_i^{(s)}|^{q-1}-2\gamma_{str}^{(s)}|\w_i^{(s)}|=0,
\end{equation*}
which when $q=\frac{1}{2}$ gives
\begin{eqnarray}
& & |\h_i|\pm\frac{1}{2}\nu_{str}^{(s)}|\w_i^{(s)}|^{-1/2}-2\gamma_{str}^{(s)}|\w_i^{(s)}|=0\nonumber \\
& \Leftrightarrow & |\h_i|\sqrt{|\w_i^{(s)}|}\pm\frac{1}{2}\nu_{str}^{(s)}-2\gamma_{str}^{(s)}\sqrt{|\w_i^{(s)}|}^{3}=0,\label{eq:cubicq05}
\end{eqnarray}
which is a cubic equation and can be solved explicitly. This of course substantially facilitates the integrations over $\h_i$. Also, similar strategy can be applied for other rational $q$. However, as mentioned in \cite{StojnicLqThrBnds10}, the ``explicit" solutions soon become more complicated than the numerical ones and we skip presenting them.

\section{Lifting $\ell_q$-minimization strong threshold}
\label{sec:strthr}

In this section we look at the so-called strong thresholds of $\ell_q$ minimization. Essentially, we will attempt to adapt the mechanism we presented in the previous section. We will again split the presentation into two main parts, the first one that deals with the basic results needed for our analysis and the second one that deals with the core arguments.

\subsection{Strong threshold preliminaries}
\label{sec:strthrprelim}

Below we start by recalling on a way to quantify behavior of $\beta_{str}^{(q)}$. In doing so we will rely on some of the mechanisms presented in \cite{StojnicCSetam09,StojnicLqThrBnds10}. As earlier, we will fairly often recall on many results/definitions that we established in \cite{StojnicCSetam09,StojnicLqThrBnds10}. We start by introducing a nice way of characterizing strong success/failure of (\ref{eq:lq}).

\begin{theorem}(Nonzero part of $\x$ has fixed location)
Assume that an $m\times n$ matrix $A$ is given. Let $\tilde{X}_{str}$ be the collection of all $k$-sparse vectors $\tilde{\x}$ in $R^n$. Let $\tilde{\x}^{(i)}$ be any $k$-sparse vector from $\tilde{X}_{str}$. Further, assume that $\y^{(i)}=A\tilde{\x}^{(i)}$ and that $\w$ is
an $n\times 1$ vector. If
\begin{equation}
(\forall \w\in \textbf{R}^n | A\w=0) \quad  \sum_{i=1}^n \b_i |\w_i|^q>0,\sum_{i=1}^n\b_i=2n-k,\b_i^2=1),\label{eq:thmeqgenstr1}
\end{equation}
then the solution of (\ref{eq:lq}) for every pair $(\y^{(i)},A)$ is the corresponding $\tilde{\x}^{(i)}$.
\label{thm:thmgenstr}
\end{theorem}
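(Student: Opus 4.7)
The plan is to prove this by the standard null-space/perturbation argument, in direct parallel with the sectional-threshold template used earlier in the paper. First I would fix any $\tilde{\x}^{(i)}\in\tilde{X}_{str}$, let $K$ denote its support (so $|K|\leq k$), and take an arbitrary feasible point $\x$ of (\ref{eq:lq}), i.e.\ $A\x=\y^{(i)}$. Setting $\w=\x-\tilde{\x}^{(i)}$ gives $A\w=\0$, so $\w$ lies in the null space of $A$. The goal is then to show that whenever $\w\neq\0$ one has $\|\x\|_q^q>\|\tilde{\x}^{(i)}\|_q^q$, which identifies $\tilde{\x}^{(i)}$ as the unique minimizer of (\ref{eq:lq}).

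Next I would split the $\ell_q$ quasinorm across the support and its complement,
\begin{equation*}
\|\x\|_q^q=\sum_{i\in K}|\tilde{\x}^{(i)}_i+\w_i|^q+\sum_{i\notin K}|\w_i|^q,
\end{equation*}
and exploit the fact that $t\mapsto t^q$ is subadditive on $[0,\infty)$ for every $q\in[0,1]$, which together with $|a|\leq|a+b|+|b|$ yields the reverse-style bound $|a+b|^q\geq|a|^q-|b|^q$. Summing this bound on $K$ and rearranging produces
\begin{equation*}
\|\x\|_q^q-\|\tilde{\x}^{(i)}\|_q^q\geq\sum_{i\notin K}|\w_i|^q-\sum_{i\in K}|\w_i|^q.
\end{equation*}

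The third step is to encode the right hand side as $\sum_i\b_i|\w_i|^q$ with an admissible sign pattern $\b$. I would enlarge $K$ if needed to a set $K'$ of cardinality exactly $k$, set $\b_i=-1$ on $K'$ and $\b_i=+1$ off $K'$, so that $\b_i^2=1$ and $\b$ meets the sign-sum constraint in (\ref{eq:thmeqgenstr1}). Applying the hypothesis to the pair $(\w,\b)$ gives $\sum_{i\notin K'}|\w_i|^q>\sum_{i\in K'}|\w_i|^q$, and since $K\subseteq K'$ this also forces $\sum_{i\notin K}|\w_i|^q>\sum_{i\in K}|\w_i|^q$. Combined with the previous display one concludes $\|\x\|_q^q>\|\tilde{\x}^{(i)}\|_q^q$. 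Ranging over every $\tilde{\x}^{(i)}\in\tilde{X}_{str}$ is automatic because the hypothesis is quantified uniformly over all admissible sign vectors $\b$, i.e.\ over all possible supports of size $k$.

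The main subtlety I anticipate is the $q=0$ endpoint, where $|t|^0=\mathbb{1}[t\neq 0]$ and the reverse subadditivity becomes the discrete inequality $\mathbb{1}[a+b\neq 0]\geq\mathbb{1}[a\neq 0]-\mathbb{1}[b\neq 0]$, which still holds and keeps the chain of estimates intact. Preserving strictness is immediate since the hypothesis supplies a strict inequality $>0$ for every nonzero $\w\in\ker A$; when $\w=\0$ there is nothing to show. Beyond these bookkeeping points the argument is essentially classical and mirrors the sectional-threshold case, the only difference being that the support $K$ of the candidate sparse vector is now allowed to vary, which is exactly what the universal quantifier over $\b$ in (\ref{eq:thmeqgenstr1}) is designed to absorb.
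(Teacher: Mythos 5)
Your proof is correct, and it supplies precisely the standard null-space/perturbation argument that the paper itself omits: the paper gives no proof of Theorem \ref{thm:thmgenstr}, only a remark that the statement is classical and a pointer to earlier references, so your write-up fills in what is left to citation rather than diverging from an in-paper argument. Two minor observations: the constraint $\sum_{i=1}^n\b_i=2n-k$ in (\ref{eq:thmeqgenstr1}) is a typo (with $\b_i=\pm 1$ the sum cannot exceed $n$); the intended normalization, consistent with (\ref{eq:deff1str}), is $\sum_{i=1}^n\b_i=n-2k$, i.e.\ exactly $k$ entries equal to $-1$, which is exactly how you read it when you padded the support $K$ to a set $K'$ of cardinality $k$ (and that padding only strengthens the inequality, as you note, since it moves the extra terms to the favorable side). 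Your use of the reverse $q$-subadditivity bound $|a+b|^q\geq |a|^q-|b|^q$, your treatment of the $q=0$ endpoint, and the implicit restriction of the hypothesis to $\w\neq \0$ (forced anyway, since the strict inequality is vacuous at $\w=\0$) are all sound, and the strict inequality gives uniqueness of the minimizer as required.
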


\noindent \textbf{Remark:} As mentioned earlier (and in \cite{StojnicLqThrBnds10}), this result is not really our own; more on similar or even the same results can be found in e.g. \cite{DH01,FN,LN,Y,XHapp,SPH,DTbern,GN03,GN04,GN07,FL08}.

We then, following the methodology of the previous section (and ultimately of \cite{StojnicCSetam09,StojnicLqThrBnds10}),
start by defining a set $\Sstr$
\begin{equation}
\Sstr=\{\w\in S^{n-1}| \quad \sum_{i=1}^n \b_i|\w_i|^q\leq 0, \sum_{i=1}^n\b_i=2n-k,\b_i^2=1\},\label{eq:defSstr}
\end{equation}
where $S^{n-1}$ is the unit sphere in $R^n$. The methodology of the previous section (and ultimately the one of \cite{StojnicCSetam09}) then proceeds by considering the following optimization problem
\begin{equation}
\xi_{str}=\min_{\w\in\Sstr}\|A\w\|_2,\label{eq:negham1str}
\end{equation}
where $q=1$ in the definition of $\Sstr$ (the same will remain true for any $0\leq q\leq 1$). Following what was done in the previous section one roughly has the following: if $\xi_{str}$ is positive with overwhelming probability for certain combination of $k$, $m$, and $n$ then for $\alpha=\frac{m}{n}$ one has a lower bound $\beta_{str}=\frac{k}{n}$ on the true value of the strong threshold with overwhelming probability. Also, the mechanisms of \cite{StojnicCSetam09} were powerful enough to establish the concentration of $\xi_{str}$. This essentially means that if we can show that $E\xi_{str}>0$ for certain $k$, $m$, and $n$ we can then obtain a lower bound on the strong threshold. In fact, this is precisely what was done in \cite{StojnicCSetam09}. However, the results we obtained for the strong threshold through such a consideration were not exact. The main reason of course was inability to determine $E\xi_{str}$ exactly. Instead we resorted to its lower bounds and those turned out to be loose. In \cite{StojnicLiftStrSec13} we used some of the ideas we recently introduced in \cite{StojnicMoreSophHopBnds10} to provide a substantial conceptual improvement in these bounds which in turn reflected in a conceptual improvement of the sectional thresholds (and later on an even substantial practical improvement of all strong thresholds). Since our analysis from the previous section hints that such a methodology could be successful in improving the sectional thresholds even for general $q$ one can be tempted to believe that it would work even better for the strong thresholds.

When it comes to the strong thresholds for a general $q$ we actually already in \cite{StojnicLqThrBnds10} adopted the strategy similar to the one employed in \cite{StojnicCSetam09}. However, the results we obtained for the through such a consideration were again not exact. The main reason again was an inability to determine $E\xi_{str}$ exactly and essentially the lower bounds we resorted to again turned out to be loose. In this section we will use some of the ideas from the previous section (and essentially those from \cite{StojnicMoreSophHopBnds10,StojnicLiftStrSec13}) to provide a substantial conceptual improvement in these bounds. A limited numerical exploration also indicates that they in turn will reflect in practical improvement of the strong thresholds as well.

We start by emulating what was done in the previous section, i.e. by presenting a way to create a lower-bound on the optimal value of (\ref{eq:negham1str}).

\subsection{Lower-bounding $\xi_{str}$}
\label{sec:lbxistr}

In this section we will look at the problem from (\ref{eq:negham1str}). We recall that as earlier, we will consider a statistical scenario and assume that the elements of $A$ are i.i.d. standard normal random variables. Such a scenario was considered in \cite{StojnicLiftStrSec13} as well and the following was done.
First we reformulated the problem in (\ref{eq:negham1str}) in the following way
\begin{equation}
\xi_{str}=\min_{\w\in\Sstr}\max_{\|\y\|_2=1}\y^TA\w.\label{eq:sqrtnegham2str}
\end{equation}
Then using results of \cite{StojnicHopBnds10} we established a lemma very similar to the following one:
\begin{lemma}
Let $A$ be an $m\times n$ matrix with i.i.d. standard normal components. Let $\g$ and $\h$ be $n\times 1$ and $m\times 1$ vectors, respectively, with i.i.d. standard normal components. Also, let $g$ be a standard normal random variable and let $c_3$ be a positive constant. Then
\begin{equation}
E(\max_{\w\in\Sstr}\min_{\|\y\|_2=1}e^{-c_3(\y^T A\w + g)})\leq E(\max_{\w\in\Ssec}\min_{\|\y\|_2=1}e^{-c_3(\g^T\y+\h^T\w)}).\label{eq:negexplemmastr}
\end{equation}\label{lemma:negexplemmastr}
\end{lemma}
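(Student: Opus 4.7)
The plan is to proceed by a direct application of Gordon's min-max comparison theorem from \cite{Gordon85}, mirroring the proof of Lemma \ref{lemma:negexplemma} verbatim with only a relabeling of the feasible set for $\w$.

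First I would introduce the two centered Gaussian processes $X_{\w,\y} = \y^T A\w + g$ and $Y_{\w,\y} = \g^T\y + \h^T\w$, indexed by $(\w,\y)$ with $\w\in\Sstr\subset S^{n-1}$ and $\|\y\|_2=1$, where $A$, $g$, $\g$, $\h$ are independent with i.i.d.\ standard normal entries. A direct computation using $\|\w\|_2=\|\y\|_2=1$ gives $E X_{\w,\y}^2 = E Y_{\w,\y}^2 = 2$, together with
\begin{equation*}
E X_{\w,\y} X_{\w,\y'} - E Y_{\w,\y} Y_{\w,\y'} = 0,
\end{equation*}
\begin{equation*}
E X_{\w,\y} X_{\w',\y'} - E Y_{\w,\y} Y_{\w',\y'} = (\y^T\y'-1)(\w^T\w'-1) \geq 0,
\end{equation*}
since on the unit sphere both $\y^T\y' - 1 \leq 0$ and $\w^T\w'-1\leq 0$. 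These are exactly the covariance conditions required by Gordon's comparison inequality, and they yield the stochastic domination $\min_{\w}\max_{\y}X_{\w,\y}\geq_{st} \min_{\w}\max_{\y}Y_{\w,\y}$ after the standard relabeling.

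Next I would use monotonicity of $u\mapsto e^{-c_3 u}$ to rewrite
\begin{equation*}
\max_{\w\in\Sstr}\min_{\|\y\|_2=1}e^{-c_3(\y^T A\w + g)} = e^{-c_3\,\min_{\w\in\Sstr}\max_{\|\y\|_2=1}(\y^T A\w + g)},
\end{equation*}
and analogously for the process $Y_{\w,\y}$. Applying the stochastic domination above and taking expectations of the (decreasing) exponential then delivers (\ref{eq:negexplemmastr}).

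The only difference from the proof of Lemma \ref{lemma:negexplemma} (and from the corresponding lemmas in \cite{StojnicHopBnds10,StojnicLiftStrSec13}) is that the feasible set for $\w$ is now $\Sstr$ instead of $\Ssec$ or a hypercube subset of $S^{n-1}$. However, the covariance computation above uses nothing about $\w$ beyond $\|\w\|_2=1$, and Gordon's theorem imposes no structural requirement on the index set beyond measurability. Consequently no substantive modification is needed, which is precisely why the paper presents the proof as an omission; I do not expect any real obstacle beyond reproducing the Laplace-transform version of Gordon's inequality, which by now is standard in this line of work.
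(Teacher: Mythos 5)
Your proposal is correct and follows essentially the same route as the paper: a direct application of Gordon's comparison theorem from \cite{Gordon85} to the processes $\y^TA\w+g$ and $\g^T\y+\h^T\w$, with the observation that the covariance checks use only $\|\w\|_2=\|\y\|_2=1$, so replacing the index set by $\Sstr$ (the paper's only stated difference from the earlier lemmas, and note the $\Ssec$ on the right-hand side of (\ref{eq:negexplemmastr}) is just an inherited typo for $\Sstr$) changes nothing structurally. You simply spell out the covariance computation and the monotone-exponential step that the paper omits.
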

\begin{proof}
As mentioned in the previous section (as well as in \cite{StojnicLiftStrSec13} and earlier in \cite{StojnicHopBnds10}), the proof is a standard/direct application of a theorem from \cite{Gordon85}. We will again omit the details since they are pretty much the same as the those in the proof of the corresponding lemmas in \cite{StojnicHopBnds10,StojnicLiftStrSec13}. However, we do mention that the only difference between this lemma and the ones from previous section and in \cite{StojnicHopBnds10,StojnicLiftStrSec13} is in set $\Sstr$. However, such a difference would introduce no structural changes in the proof.
\end{proof}

Following step by step what was done after Lemma 3 in \cite{StojnicHopBnds10} one arrives at the following analogue of \cite{StojnicHopBnds10}'s equation $(57)$:
\begin{equation}
E(\min_{\w\in\Sstr}\|A\w\|_2)\geq
\frac{c_3}{2}-\frac{1}{c_3}\log(E(\max_{\w\in\Sstr}(e^{-c_3\h^T\w})))
-\frac{1}{c_3}\log(E(\min_{\|\y\|_2=1}(e^{-c_3\g^T\y}))).\label{eq:chneg8str}
\end{equation}
Let $c_3=c_3^{(s)}\sqrt{n}$ where $c_3^{(s)}$ is a constant independent of $n$. Then (\ref{eq:chneg8str}) becomes
\begin{eqnarray}
\hspace{-.5in}\frac{E(\min_{\w\in\Sstr}\|A\w\|_2)}{\sqrt{n}}
& \geq &
\frac{c_3^{(s)}}{2}-\frac{1}{nc_3^{(s)}}\log(E(\max_{\w\in\Sstr}(e^{-c_3^{(s)}\h^T\w})))
-\frac{1}{nc_3^{(s)}}\log(E(\min_{\|\y\|_2=1}(e^{-c_3^{(s)}\sqrt{n}\g^T\y})))\nonumber \\
& = &-(-\frac{c_3^{(s)}}{2}+I_{str}(c_3^{(s)},\beta)+I_{sph}(c_3^{(s)},\alpha)),\label{eq:chneg9str}
\end{eqnarray}
where
\begin{eqnarray}
I_{str}(c_3^{(s)},\beta) & = & \frac{1}{nc_3^{(s)}}\log(E(\max_{\w\in\Sstr}(e^{-c_3^{(s)}\h^T\w})))\nonumber \\
I_{sph}(c_3^{(s)},\alpha) & = & \frac{1}{nc_3^{(s)}}\log(E(\min_{\|\y\|_2=1}(e^{-c_3^{(s)}\sqrt{n}\g^T\y}))).\label{eq:defIsstr}
\end{eqnarray}

One should now note that the above bound is effectively correct for any positive constant $c_3^{(s)}$. The only thing that is then left to be done so that the above bound becomes operational is to estimate $I_{sec}(c_3^{(s)},\beta)$ and $I_{sph}(c_3^{(s)},\alpha)$. Of course, $I_{sph}(c_3^{(s)},\alpha)$ has already been characterized in (\ref{eq:gamaiden3}) and (\ref{eq:Isph}). That basically means that the only thing that is left to characterize is $I_{str}(c_3^{(s)},\beta)$. Similarly to what was stated in \cite{StojnicLiftStrSec13}, pretty good estimates for this quantity can be obtained for any $n$. However, to facilitate the exposition we will, as earlier, focus only on the large $n$ scenario. Let $f(\w)=-\h^T\w$. Following \cite{StojnicLqThrBnds10} one can arrive at
\begin{equation}
\max_{\w\in\Sstr}f(\w)=-\min_{\w\in\Sstr} -\h^T\w
\leq\min_{\gamma_{str}\geq 0,\nu_{str}\geq 0} f_2(q,\h,\nu_{str},\gamma_{str},\beta)+\gamma_{str},\label{eq:seceq1}
\end{equation}
where
\begin{equation}
f_2(q,\h,\nu_{str},\gamma_{str},\beta)=\max_{\w,\b_i^2=1}\left (\sum_{i=1}^{n}(|\h_i||\w_i|-\nu_{str}^{(1)}\b_i|\w_i|^q-\gamma_{str}\w_i^2)
+\nu_{str}^{(2)}\sum_{i=1}^{n}\b_i-\nu_{str}^{(2)}(n-2k) \right ).\label{eq:deff1str}
\end{equation}
Then
\begin{multline}
I_{str}(c_3^{(s)},\beta)  =  \frac{1}{nc_3^{(s)}}\log(E(\max_{\w\in\Sstr}(e^{-c_3^{(s)}\h^T\w}))) = \frac{1}{nc_3^{(s)}}\log(E(\max_{\w\in\Sstr}(e^{c_3^{(s)}f(\w))})))\\
\hspace{-.3in}=\frac{1}{nc_3^{(s)}}\log(Ee^{c_3^{(s)}\sqrt{n}\min_{\gamma_{str},\nu_{str}^{(1)},\nu_{str}^{(2)}\geq 0}(f_2(\h,\nu_{str},\gamma_{str},\beta)+\gamma_{str})})
\doteq \frac{1}{nc_3^{(s)}}\min_{\gamma_{str},\nu_{str}\geq 0}\log(Ee^{c_3^{(s)}\sqrt{n}(f_2(q,\h,\nu_{str},\gamma_{str},\beta)+\gamma_{str})})\\
=\min_{\gamma_{str},\nu_{str}^{(1)},\nu_{str}^{(2)}\geq 0}(\frac{\gamma_{str}}{\sqrt{n}}+\frac{1}{nc_3^{(s)}}\log(Ee^{c_3^{(s)}\sqrt{n}(f_2(q,\h,\nu_{str},\gamma_{str},\beta))})),\label{eq:gamaiden1str}
\end{multline}
where, as earlier, $\doteq$ stands for equality when $n\rightarrow \infty$. Now if one sets $\w_{i}=\frac{\w_{i}^{(s)}}{\sqrt{n}}$, $\gamma_{str}=\gamma_{str}^{(s)}\sqrt{n}$, $\nu_{str}^{(1)}=\nu_{str}^{(1,s)}\sqrt{n}^{q-1}$, and $\nu_{str}^{(2)}=\nu_{str}^{(2,s)}\sqrt{n}$ (where $\w_{i}^{(s)}$, $\gamma_{str}^{(s)}$, $\nu_{str}^{(1,s)}$, and $\nu_{str}^{(2,s)}$ are independent of $n$) then (\ref{eq:gamaiden1str}) gives
\begin{multline}
I_{str}(c_3^{(s)},\beta)
=\min_{\gamma_{str},\nu_{str}^{(1)},\nu_{str}^{(2)}\geq 0}(\frac{\gamma_{str}}{\sqrt{n}}+\frac{1}{nc_3^{(s)}}\log(Ee^{c_3^{(s)}\sqrt{n}(f_2(q,\h,\nu_{str},\gamma_{str},\beta))})\\
\hspace{-.5in}=\min_{\gamma_{str}^{(s)},\nu_{str}^{(1,s)},\nu_{str}^{(2,s)}\geq 0}(\gamma_{str}^{(s)}+\nu_{str}^{(2,s)}(2\beta-1)+\frac{1}{c_3^{(s)}}\log\left (Ee^{\left (c_3^{(s)}
\max_{\w,\b_i^2=1}\left ((|\h_i||\w_i^{(s)}|-\nu_{str}^{(1,s)}\b_i|\w_i^{(s)}|^q-\gamma_{str}^{(s)}(\w_i^{(s)})^2)
+\nu_{str}^{(2,s)}\sum_{i=1}^{n}\b_i \right )\right )}\right )
\\=\min_{\gamma_{str}^{(s)},\nu_{str}^{(1,s)},\nu_{str}^{(2,s)}\geq 0}(\gamma_{str}^{(s)}+\nu_{str}^{(2,s)}(2\beta-1)+\frac{1}{c_3^{(s)}}\log(I_{str}^{(1)})),\\\label{eq:gamaiden2str}
\end{multline}
where
\begin{equation}
I_{str}^{(1)}  = Ee^{\left (c_3^{(s)}\max_{\w,\b_i^2=1}\left ((|\h_i||\w_i^{(s)}|-\nu_{str}^{(1,s)}\b_i|\w_i^{(s)}|^q-\gamma_{str}^{(s)}(\w_i^{(s)})^2)
+\nu_{str}^{(2,s)}\sum_{i=1}^{n}\b_i \right )\right )}.\label{eq:defI1I2str}
\end{equation}

We summarize the above results related to the sectional threshold ($\beta_{str}^{(q)}$) in the following theorem.

\begin{theorem}(Strong threshold - lifted lower bound)
Let $A$ be an $m\times n$ measurement matrix in (\ref{eq:system})
with i.i.d. standard normal components. Let $\tilde{X}_{str}$ be the collection of all $k$-sparse vectors $\tilde{\x}$ in $R^n$. Let $\tilde{\x}^{(i)}$ be any $k$-sparse vector from $\tilde{X}_{str}$. Further, assume that $\y^{(i)}=A\tilde{\x}^{(i)}$. Let $k,m,n$ be large
and let $\alpha=\frac{m}{n}$ and $\betastr^{(q)}=\frac{k}{n}$ be constants
independent of $m$ and $n$. Let $c_3^{(s)}$ be a positive constant and set
\begin{equation}
\widehat{\gamma_{sph}^{(s)}}=\frac{2c_3^{(s)}-\sqrt{4(c_3^{(s)})^2+16\alpha}}{8},\label{eq:gamaiden3thmstr}
\end{equation}
and
\begin{equation}
I_{sph}(c_3^{(s)},\alpha)=
\left ( \widehat{\gamma_{sph}^{(s)}}-\frac{\alpha}{2c_3^{(s)}}\log(1-\frac{c_3^{(s)}}{2\widehat{\gamma_{sph}^{(s)}}}\right ).\label{eq:Isphthmstr}
\end{equation}
Further let
\begin{equation}
I_{str}^{(1)}  = Ee^{\left (c_3^{(s)}\max_{\w,\b_i^2=1}\left ((|\h_i||\w_i^{(s)}|-\nu_{str}^{(1,s)}\b_i|\w_i^{(s)}|^q-\gamma_{str}^{(s)}(\w_i^{(s)})^2)
+\nu_{str}^{(2,s)}\sum_{i=1}^{n}\b_i \right )\right )}.\label{eq:defI1I2secthmstr}
\end{equation}
and
\begin{equation}
I_{str}(c_3^{(s)},\betastr^{(q)})=\min_{\gamma_{str}^{(s)},\nu_{str}^{(1,s)},\nu_{str}^{(2,s)}\geq 0}(\gamma_{str}^{(s)}+\nu_{str}^{(2,s)}(2\betastr^{(q)}-1)+\frac{1}{c_3^{(s)}}\log(I_{str}^{(1)})).\label{eq:seccondthmstr}
\end{equation}
If $\alpha$ and $\betastr^{(q)}$ are such that
\begin{equation}
\min_{c_3^{(s)}}(-\frac{c_3^{(s)}}{2}+I_{str}(c_3^{(s)},\betastr^{(q)})+I_{sph}(c_3^{(s)},\alpha))<0,\label{eq:seccondthmstr}
\end{equation}
then with overwhelming probability the solution of (\ref{eq:lq}) for every pair $(\y^{(i)},A)$ is the corresponding $\tilde{\x}^{(i)}$.\label{thm:thmstrthrlq}
\end{theorem}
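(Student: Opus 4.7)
The plan is to mirror essentially the proof strategy used for the sectional threshold in Theorem \ref{thm:thmsecthrlq}, but with the modifications required to handle the combinatorial sign vector $\b$ that appears in the definition of $\Sstr$. First, I would invoke Theorem \ref{thm:thmgenstr} to reduce the claim to showing that no nonzero $\w$ in the null space of $A$ lies in (the cone generated by) $\Sstr$, and then reduce this geometric statement to the analytic statement $\xi_{str}=\min_{\w\in\Sstr}\|A\w\|_2>0$ with overwhelming probability, since the overwhelming probability bound plus concentration is what delivers the ``for every $\tilde\x^{(i)}$'' quantifier.

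The heart of the argument is lower bounding $E\xi_{str}$. To that end, I would rewrite $\xi_{str}$ as in (\ref{eq:sqrtnegham2str}) and apply Lemma \ref{lemma:negexplemmastr} (Gordon's inequality), which decouples the bilinear form $\y^TA\w$ into the sum $\g^T\y+\h^T\w$. Following the derivation leading to (\ref{eq:chneg8str}) in \cite{StojnicHopBnds10}, i.e.\ taking logarithms, using the standard normal moment generating function for $g$, and optimizing, I arrive at the inequality
\begin{equation*}
\frac{E\xi_{str}}{\sqrt n}\;\ge\;-\Bigl(-\tfrac{c_3^{(s)}}{2}+I_{str}(c_3^{(s)},\beta)+I_{sph}(c_3^{(s)},\alpha)\Bigr),
\end{equation*}
which is exactly (\ref{eq:chneg9str}) after setting $c_3=c_3^{(s)}\sqrt n$. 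Thus if the hypothesis (\ref{eq:seccondthmstr}) holds (minimized over $c_3^{(s)}>0$), then $E\xi_{str}\ge \epsilon\sqrt n$ for some $\epsilon>0$.

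To make this operational, I need the two explicit formulas. The term $I_{sph}(c_3^{(s)},\alpha)$ is unchanged from the sectional analysis, so (\ref{eq:gamaiden3thmstr})--(\ref{eq:Isphthmstr}) just quote (\ref{eq:gamaiden3})--(\ref{eq:Isph}). For $I_{str}(c_3^{(s)},\beta)$, I would relax $\max_{\w\in\Sstr}(-\h^T\w)$ by introducing Lagrange multipliers $\gamma_{str}\ge 0$ for the sphere constraint, $\nu_{str}^{(1)}\ge 0$ for the sign inequality $\sum \b_i|\w_i|^q\le 0$, and $\nu_{str}^{(2)}\ge 0$ for the count constraint $\sum \b_i=2n-k$ (with the combinatorial constraint $\b_i^2=1$ retained explicitly inside the max). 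This yields the bound in (\ref{eq:seceq1})--(\ref{eq:deff1str}). The expectation then factorizes because $\h$ has i.i.d.\ components and the inner max over $(\w_i,\b_i)$ is separable across $i$, so after the scaling $\w_i=\w_i^{(s)}/\sqrt n$, $\gamma_{str}=\gamma_{str}^{(s)}\sqrt n$, $\nu_{str}^{(1)}=\nu_{str}^{(1,s)}\sqrt n^{\,q-1}$, $\nu_{str}^{(2)}=\nu_{str}^{(2,s)}\sqrt n$ one recovers (\ref{eq:gamaiden2str}) and hence (\ref{eq:defI1I2secthmstr})--(\ref{eq:seccondthmstr}); the interchange of $\min$ and $\log E$ that takes the $\doteq$ in (\ref{eq:gamaiden1str}) to an equality in the large-$n$ limit is justified by the saddle-point mechanism of \cite{SPH} (and we only need the $\le$ direction anyway).

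Finally, to upgrade $E\xi_{str}\ge \epsilon\sqrt n$ to $\xi_{str}>0$ with overwhelming probability, I would invoke Gaussian concentration: $\w\mapsto \|A\w\|_2$ is $1$-Lipschitz in $A$ for $\w$ on the sphere, and $\xi_{str}$ is a min of such functions, hence itself $1$-Lipschitz in the Gaussian matrix $A$, so it concentrates around its mean with exponential tails in $n$. The main technical obstacle is the combinatorial $\b$: one must verify that the Lagrangian relaxation over $\b_i\in\{-1,+1\}$ preserves the correct min-max order and that the resulting per-coordinate maximization problem (whose integrand appears inside $I_{str}^{(1)}$) has a finite expectation for the relevant ranges of $(\gamma_{str}^{(s)},\nu_{str}^{(1,s)},\nu_{str}^{(2,s)})$; both are handled as in the sectional case since the relaxation only introduces an additional max over a two-point set, which does not alter the structural form of the Gordon-based bound.
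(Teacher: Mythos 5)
Your proposal follows essentially the same route as the paper: the null-space characterization of Theorem \ref{thm:thmgenstr}, reduction to positivity of $\xi_{str}$ with concentration, the Gordon-type comparison of Lemma \ref{lemma:negexplemmastr} leading to (\ref{eq:chneg9str}), the sphere term $I_{sph}$ carried over from the sectional case, and the Lagrangian/scaling computation of $I_{str}$ via (\ref{eq:deff1str})--(\ref{eq:gamaiden2str}), which is exactly the ``above discussion'' the paper's proof refers to. No gaps beyond the numerical caveats the paper itself acknowledges.
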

\begin{proof}
Follows from the above discussion.
\end{proof}

One also has immediately the following corollary.

\begin{corollary}(Strong threshold - lower bound \cite{StojnicLqThrBnds10})
Let $A$ be an $m\times n$ measurement matrix in (\ref{eq:system})
with i.i.d. standard normal components. Let $\tilde{X}_{str}$ be the collection of all $k$-sparse vectors $\tilde{\x}$ in $R^n$. Let $\tilde{\x}^{(i)}$ be any $k$-sparse vector from $\tilde{X}_{str}$. Further, assume that $\y^{(i)}=A\tilde{\x}^{(i)}$. Let $k,m,n$ be large
and let $\alpha=\frac{m}{n}$ and $\betastr^{(q)}=\frac{k}{n}$ be constants
independent of $m$ and $n$. Let
\begin{equation}
I_{sph}(\alpha)=
-\sqrt{\alpha}.\label{eq:Isphcorstr}
\end{equation}
Further let
\begin{equation}
I_{str}^{(1)}=\max_{\w,\b_i^2=1}\left ((|\h_i||\w_i^{(s)}|-\nu_{str}^{(1,s)}\b_i|\w_i^{(s)}|^q-\gamma_{str}^{(s)}(\w_i^{(s)})^2)
+\nu_{str}^{(2,s)}\sum_{i=1}^{n}\b_i \right ).\label{eq:defI1I2strcor}
\end{equation}
and
\begin{equation}
I_{str}(\betastr^{(q)})=\min_{\gamma_{str}^{(s)},\nu_{str}^{(1,s)},\nu_{str}^{(2,s)}\geq 0}(\gamma_{str}^{(s)}+\nu_{str}^{(2,s)}(2\betastr^{(q)}-1)+I_{str}^{(1)}).\label{eq:seccondcorstr}
\end{equation}
If $\alpha$ and $\betastr^{(q)}$ are such that
\begin{equation}
I_{str}(\betastr^{(q)})+I_{sph}(\alpha)<0,\label{eq:seccondcorstr}
\end{equation}
then with overwhelming probability the solution of (\ref{eq:lq}) for every pair $(\y^{(i)},A)$ is the corresponding $\tilde{\x}^{(i)}$.\label{cor:corstrthrlq}
\end{corollary}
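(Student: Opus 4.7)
The plan is to derive this corollary from Theorem \ref{thm:thmstrthrlq} by sending $c_3^{(s)} \to 0^{+}$, in direct parallel with how Corollary \ref{cor:corsecthrlq} was obtained from Theorem \ref{thm:thmsecthrlq}. All of the right-hand quantities in the corollary are the obvious $c_3^{(s)} \downarrow 0$ specializations of those appearing in the theorem, so the argument is one of taking a careful limit inside the hypothesis (\ref{eq:seccondthmstr}) and verifying that it collapses to (\ref{eq:seccondcorstr}).

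First I would analyze the spherical contribution. From (\ref{eq:gamaiden3thmstr}) a direct Taylor expansion gives $\widehat{\gamma_{sph}^{(s)}} = -\tfrac{\sqrt{\alpha}}{2} + \tfrac{c_3^{(s)}}{4} + O((c_3^{(s)})^2)$ as $c_3^{(s)} \to 0^{+}$. Plugging this into (\ref{eq:Isphthmstr}) and expanding $\log(1 - \tfrac{c_3^{(s)}}{2\widehat{\gamma_{sph}^{(s)}}})$ to second order, the $\tfrac{\alpha}{2c_3^{(s)}}$ prefactor cancels against the linear term to produce $-\tfrac{\sqrt{\alpha}}{2}$, and one obtains $I_{sph}(c_3^{(s)}, \alpha) \to -\sqrt{\alpha}$, which is exactly $I_{sph}(\alpha)$ in (\ref{eq:Isphcorstr}). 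The stray $-c_3^{(s)}/2$ term in the theorem's condition evidently vanishes.

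Next I would treat $I_{str}(c_3^{(s)}, \beta_{str}^{(q)})$ using the elementary Laplace/cumulant identity $\lim_{c \to 0^{+}} \tfrac{1}{c} \log E e^{c X} = E X$ valid for sufficiently integrable $X$. Applying this to $I_{str}^{(1)}$ in (\ref{eq:defI1I2secthmstr}) with
\begin{equation*}
X = \max_{\w,\b_i^2=1}\bigl((|\h_i||\w_i^{(s)}| - \nu_{str}^{(1,s)} \b_i |\w_i^{(s)}|^q - \gamma_{str}^{(s)}(\w_i^{(s)})^2) + \nu_{str}^{(2,s)} \textstyle\sum_{i=1}^{n}\b_i\bigr),
\end{equation*}
one gets $\tfrac{1}{c_3^{(s)}} \log I_{str}^{(1)} \to E X$, which matches the corollary's $I_{str}^{(1)}$ in (\ref{eq:defI1I2strcor}) (with the expectation understood). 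The optimization objective in (\ref{eq:seccondthmstr}) therefore tends pointwise in the parameters $(\gamma_{str}^{(s)}, \nu_{str}^{(1,s)}, \nu_{str}^{(2,s)})$ to $\gamma_{str}^{(s)} + \nu_{str}^{(2,s)}(2\beta_{str}^{(q)}-1) + I_{str}^{(1)}$, so $I_{str}(c_3^{(s)}, \beta_{str}^{(q)}) \to I_{str}(\beta_{str}^{(q)})$ from (\ref{eq:seccondcorstr}).

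Assembling the two limits, the theorem's condition $-\tfrac{c_3^{(s)}}{2} + I_{str}(c_3^{(s)}, \beta_{str}^{(q)}) + I_{sph}(c_3^{(s)}, \alpha) < 0$ passes, in the $c_3^{(s)} \to 0^{+}$ limit, to $I_{str}(\beta_{str}^{(q)}) + I_{sph}(\alpha) < 0$, which is precisely the corollary's hypothesis; since the minimization over $c_3^{(s)}$ in the theorem in particular allows us to choose $c_3^{(s)}$ arbitrarily small, the corollary's hypothesis indeed implies the theorem's. The main technical point, and essentially the only place where more than algebra is required, is to justify the exchange of the $c_3^{(s)} \to 0^{+}$ limit with the infima over $(\gamma_{str}^{(s)}, \nu_{str}^{(1,s)}, \nu_{str}^{(2,s)})$ and with the Gaussian expectation over $\h_i$: a standard dominated-convergence argument suffices, using that for any fixed $\gamma_{str}^{(s)} > 0$ the quadratic penalty $-\gamma_{str}^{(s)}(\w_i^{(s)})^2$ keeps the inner maximum controlled by a polynomial in $|\h_i|$, so the cumulant expansion is uniform on compacta in parameter space. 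Once this exchange is licensed, the conclusion of Theorem \ref{thm:thmstrthrlq} passes to the limit and yields the corollary.
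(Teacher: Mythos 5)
Your proposal is correct and follows exactly the route of the paper, whose own proof of Corollary \ref{cor:corstrthrlq} is simply that it follows from Theorem \ref{thm:thmstrthrlq} by taking $c_3^{(s)}\rightarrow 0$; your expansion of $\widehat{\gamma_{sph}^{(s)}}$ giving $I_{sph}(c_3^{(s)},\alpha)\rightarrow-\sqrt{\alpha}$, the cumulant limit $\frac{1}{c}\log Ee^{cX}\rightarrow EX$, and the observation that small $c_3^{(s)}$ is admissible in the theorem's minimization just spell out the details the paper leaves implicit.
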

\begin{proof}
Follows from the above theorem by taking $c_3^{(s)}\rightarrow 0$.
\end{proof}

\noindent \textbf{Remark:} Although the results in the above corollary appear visually a bit different from those given in \cite{StojnicLqThrBnds10} it is not that hard to show that they are in fact the same.


The results for the strong threshold obtained from the above theorem
are presented in Figure \ref{fig:str}. To be a bit more specific, we again selected four different values of $q$, namely $q\in\{0,0.1,0.3,0.5\}$ in addition to standard $q=1$ case already discussed in \cite{StojnicCSetam09}. Also, we present in Figure \ref{fig:str} the results one can get from Theorem \ref{thm:thmstrthrlq} when $c_3^{(s)}\rightarrow 0$ (i.e. from Corollary \ref{cor:corstrthrlq}, see e.g. \cite{StojnicLqThrBnds10}).
\begin{figure}[htb]
\begin{minipage}[b]{.5\linewidth}
\centering
\centerline{\epsfig{figure=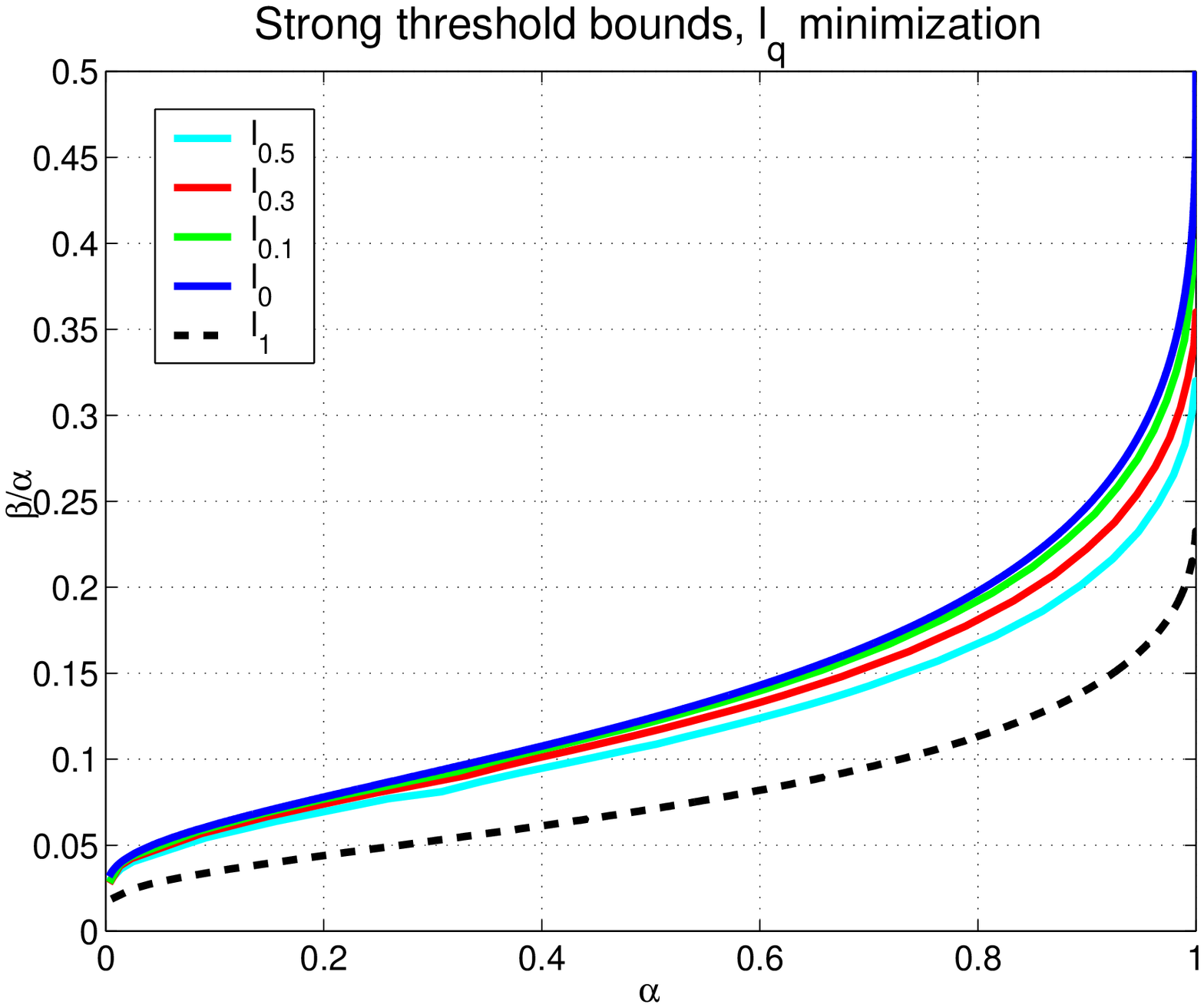,width=8cm,height=6.5cm}}
\end{minipage}
\begin{minipage}[b]{.5\linewidth}
\centering
\centerline{\epsfig{figure=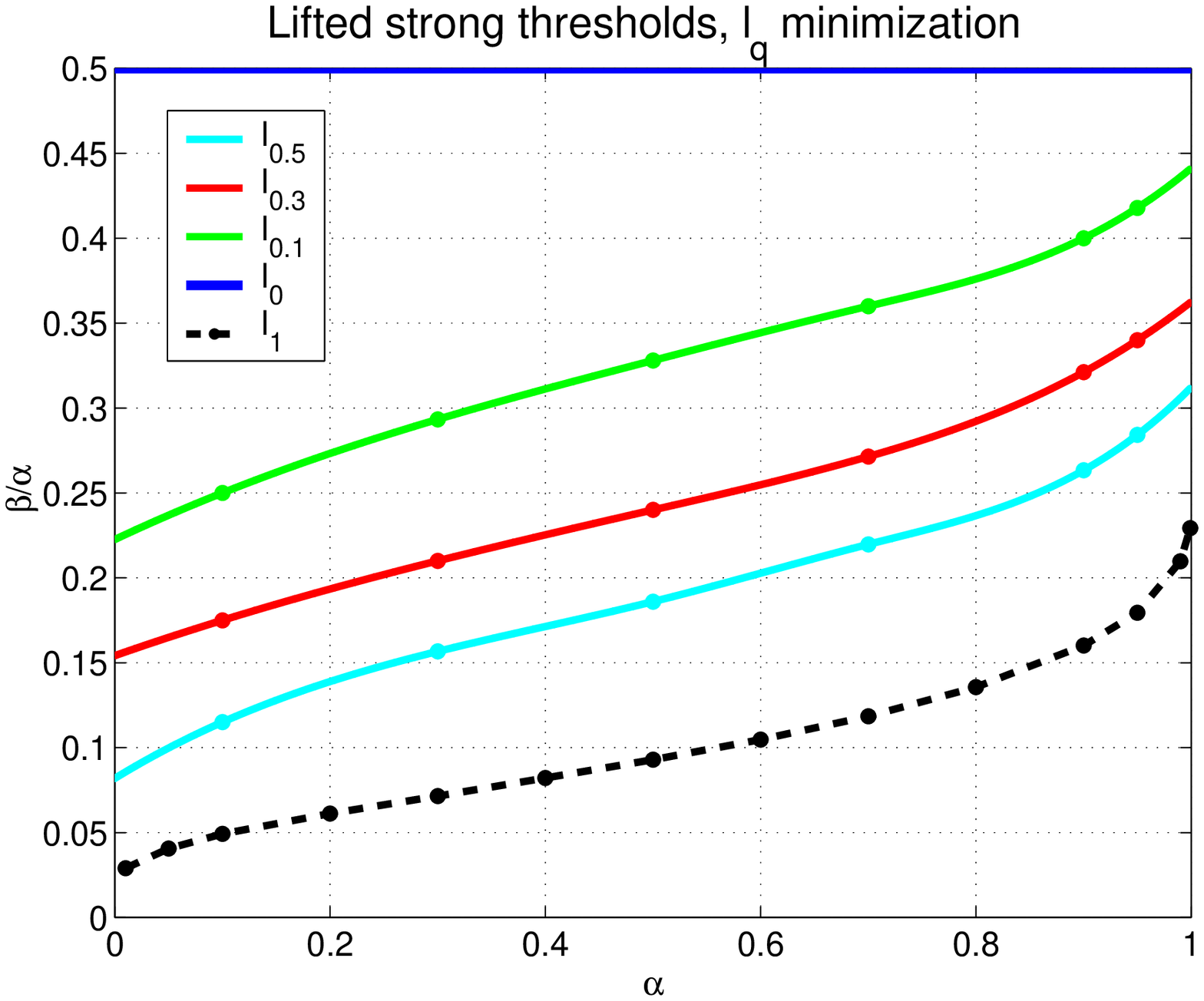,width=8cm,height=6.5cm}}
\end{minipage}
\caption{\emph{Strong} thresholds, $\ell_q$ optimization; a) left -- $c_3\rightarrow 0$; b) right -- optimized $c_3$}
\label{fig:str}
\end{figure}

As can be seen from Figure \ref{fig:str}, the results for selected values of $q$ are better than for $q=1$. Also the results improve on those presented in \cite{StojnicLqThrBnds10} and essentially obtained based on Corollary \ref{cor:corstrthrlq}, i.e. Theorem \ref{thm:thmstrthrlq} for $c_3^{(s)}\rightarrow 0$.

Also, we should emphasize that all the remarks related to numerical precisions/imprecisions we made when presenting results for the sectional thresholds in the previous section remain valid here as well. In fact, obtaining numerical results for the strong thresholds based on Theorem \ref{thm:thmstrthrlq} is even harder than obtaining the corresponding sectional ones using Theorem \ref{thm:thmsecthrlq} (essentially, one now has an extra optimization to do). So, one should again be careful when interpreting the presented results. They are again given more as an illustration so that the above theorem does not appear dry. It is on the other hand a very serious numerical analysis problem to actually obtain the numerical values for the thresholds based on the above theorem. We will investigate it in a greater detail elsewhere; here we only attempted to give a flavor as to what one can expect for these results to be.

Also, as mentioned earlier, all possible sub-optimal values that we obtained certainly don't jeopardize the rigorousness of the lower-bounding concept that we presented. However, the numerical integrations and possible finite precision errors when globally optimizing over $\w$ may contribute to curves being higher than they should. We however, firmly believe that this is not the case (or if it is that it is not to a drastic extent). As for how far away from the optimal thresholds the presented curves are, we do not know that. Conceptually however, the results presented in Theorem \ref{thm:thmstrthrlq} are probably not that far away from the optimal ones.

\subsection{Special case}
\label{sec:strthrspecial}

Similarly to what we did when we studied sectional thresholds in Section \ref{sec:secthr}, in this subsection we look at a couple of special cases for which the string thresholds can be computed more efficiently.

\subsubsection{$q\rightarrow 0$}
\label{sec:strthrspecialq0}

We will consider case $q\rightarrow 0$. As was the case when we studied sectional thresholds, there are many methods how the strong thresholds for $q\rightarrow 0$ case can be handled. Rather than obtaining the exact threshold results our goal is again to see what kind of performance would the methodology presented above give when $q\rightarrow 0$.

We of course again closely follow the methodology introduced above. As earlier, we will need a few modifications though. We start by introducing set $\Sstr^{(0)}$
\begin{equation}
\Sstr^{(0)}=\{\w^{(0)}\in S^{n-1}| \w_i^{(0)}=\b_i\w_i,1\leq i\leq n,\sum_{i=1}^{n}\b_i=2k;\sum_{i=1}^{n}\w_i^2=1\}.\label{eq:defSstr0}
\end{equation}
It is not that hard to see that when $q\rightarrow 0$ the above set can be used to characterize strong failure of $\ell_0$ optimization in a manner similar to the one set $\Sstr$ was used earlier to characterize strong failure of $\ell_q$ for a general $q$. Let $f(\w^{(0)})=\h^T\w^{(0)}$ and
we start with the following line of identities
\begin{multline}
\hspace{-.5in}\max_{\w^{(0)}\in\Sstr^{(0)}}f(\w^{(0)})=-\min_{\w^{(0)}\in\Sstr^{(0)}} -\h^T\w^{(0)}=-\min_{\w}\max_{\gamma_{str}\geq 0,\nu_{str}^{(0)}\geq 0}
-\sum_{i=1}^{n}\h_i\b_i\w_i
+\nu_{str}^{(0)}\sum_{i=1}^{n}\b_i
-\nu_{str}^{(0)}2k+\gamma_{str}\sum_{i=1}^{n}\w_i^2-\gamma_{str}\\
\leq-\max_{\gamma_{str}\geq 0,\nu_{str}^{(0)}\geq 0}\min_{\w}
-\sum_{i=1}^{n}\h_i\b_i\w_i
+\nu_{str}^{(0)}\sum_{i=1}^{n}\b_i
-\nu_{str}^{(0)}2k+\gamma_{str}\sum_{i=1}^{n}\w_i^2-\gamma_{str}\\
=\min_{\gamma_{str}\geq 0,\nu_{str}^{(0)}\geq 0}\max_{\w}
-\sum_{i=1}^{n}\h_i\b_i\w_i
-\nu_{str}^{(0)}\sum_{i=1}^{n}\b_i
+\nu_{str}^{(0)}2k-\gamma_{str}\sum_{i=1}^{n}\w_i^2+\gamma_{str}\\
=\min_{\gamma_{str}\geq 0,\nu_{str}^{(0)}\geq 0}
\sum_{i=1}^{n}\max\{\frac{\h_i^2}{4\gamma_{str}}-\nu_{str}^{(0)},0\}
+\nu_{str}^{(0)}2k+\gamma_{str}
=\min_{\gamma_{str}\geq 0,\nu_{str}^{(0)}\geq 0} f_2^{(0)}(\h,\nu_{str},\gamma_{str},\beta)+\gamma_{str},\label{eq:seceq1q0str}
\end{multline}
where
\begin{equation}
f_2^{(0)}(\h,\nu_{str},\gamma_{str},\beta)=\left (\sum_{i=1}^{n}\max\{\frac{\h_i^2}{4\gamma_{str}}-\nu_{str}^{(0)},0\}+\nu_{str}^{(0)}2k\right ).\label{eq:deff1q0str}
\end{equation}
Now one can write analogously to (\ref{eq:gamaiden1str})
\begin{equation}
I_{str}^{(0)}(c_3^{(s)},\beta)
\doteq \min_{\gamma_{str},\nu_{str}\geq 0}(\frac{\gamma_{str}}{\sqrt{n}}+\frac{1}{nc_3^{(s)}}\log(Ee^{c_3^{(s)}\sqrt{n}(f_1^{(0)}(\h,\nu_{str},\gamma_{str},\beta))})).\label{eq:gamaiden1secq0}
\end{equation}
After further introducing $\gamma_{str}=\gamma_{str}^{(s)}\sqrt{n}$, and $\nu_{str}^{(0)}=\nu_{str}^{(0,s)}\sqrt{n}^{-1}$ (where $\gamma_{str}^{(s)}$, and $\nu_{str}^{(0,s)}$ are independent of $n$) one can write analogously to (\ref{eq:gamaiden2sec})
\begin{multline}
I_{str}^{(0)}(c_3^{(s)},\beta)
\doteq\min_{\gamma_{str},\nu_{str}\geq 0}(\frac{\gamma_{str}}{\sqrt{n}}+\frac{1}{nc_3^{(s)}}\log(Ee^{c_3^{(s)}\sqrt{n}(f_1^{(0)}(\h,\nu_{str},\gamma_{str},\beta))})\\
=\min_{\gamma_{str}^{(s)},\nu_{str}^{(0,s)}\geq 0}(\gamma_{str}^{(s)}+2\beta\nu_{str}^{(0,s)}+\frac{1}{c_3^{(s)}}\log(Ee^{(c_3^{(s)}
\max\{\frac{\h_i^2}{\gamma_{str}^{(s)}}-\nu_{str}^{(0,s)},0\})}))
=\min_{\gamma_{str}^{(s)},\nu_{str}^{(0,s)}\geq 0}(\gamma_{str}^{(s)}+2\beta\nu_{str}^{(0,s)}+\frac{1}{c_3^{(s)}}\log(I_{sec}^{(0,1)})),\\\label{eq:gamaiden2secq0}
\end{multline}
where
\begin{equation}
I_{str}^{(0,1)}  =  Ee^{(c_3^{(s)}\max\{\frac{\h_i^2}{\gamma_{str}^{(s)}}-\nu_{str}^{(0,s)},0\})}.\label{eq:defI1I2strq0}
\end{equation}
One can then write analogously to (\ref{eq:seccondthmstr})
\begin{equation}
\min_{c_3^{(s)}}(-\frac{c_3^{(s)}}{2}+I_{str}^{(0)}(c_3^{(s)},\beta)+I_{sph}(c_3^{(s)},\alpha))<0.\label{eq:seccondthmstrq0}
\end{equation}
Setting $b=\frac{c_3^{(s)}}{4\gamma_{str}}$, $\nu_{str}^{(0,s,\gamma)}=4\gamma_{str}\nu_{str}^{(0,s)}$, and solving the integrals one from (\ref{eq:seccondthmstrq0}) has the following condition for $\beta$ and $\alpha$
\begin{multline}
\frac{2b\nu_{str}^{(0,s,\gamma)}\beta}{c_3^{(s)}}+c_3^{(s)}\frac{1-2b}{4b}\\+\frac{1}{c_3^{(s)}}\log \left (\frac{e^{-b\nu_{str}^{(0,s,\gamma)}}}{\sqrt{1-2b}}\mbox{erfc}\left (\sqrt{\frac{1-2b}{2}\nu_{str}^{(0,s,\gamma)}}\right )
+\mbox{erf}\left (\sqrt{\frac{\nu_{str}^{(0,s,\gamma)}}{2}}\right )\right )+I_{sph}(c_3^{(s)},\alpha)<0.\label{q0condIsph1str}
\end{multline}
Now, assuming $c_3^{(s)}$ is large one has
\begin{equation}
I_{sph}(c_3^{(s)},\alpha)\approx -\frac{\alpha}{2c_3^{(s)}}-\frac{\alpha}{2c_3^{(s)}}\log(1+\frac{(c_3^{(s)})^2}{\alpha}).\label{q0condIsph2str}
\end{equation}
Setting
\begin{eqnarray}
1-2b & = & \frac{\alpha}{(c_3^{(s)})^2}\nonumber \\
\nu_{str}^{(0,s,\gamma)} & = & \log(\frac{(c_3^{(s)})^2}{\alpha}),\label{eq:q0condstr}
\end{eqnarray}
one from (\ref{q0condIsph1str}) and (\ref{q0condIsph2str}) has
\begin{multline}
\frac{2b\nu_{str}^{(0,s,\gamma)}\beta}{c_3^{(s)}}+c_3^{(s)}\frac{1-2b}{4b}\\+\frac{1}{c_3^{(s)}}\log \left (\frac{e^{-b\nu_{str}^{(0,s,\gamma)}}}{\sqrt{1-2b}}\mbox{erfc}\left (\sqrt{\frac{1-2b}{2}\nu_{str}^{(0,s,\gamma)}}\right )
+\mbox{erf}\left (\sqrt{\frac{\nu_{str}^{(0,s,\gamma)}}{2}}\right )\right )+I_{sph}(c_3^{(s)},\alpha)
= O\left (\frac{(2\beta-\alpha)\log(c_3^{(s)})}{c_3^{(s)}}\right ).\\\label{q0condIsph3str}
\end{multline}
Then from (\ref{q0condIsph3str}) one has that as long as $\beta_{str}^{(0)}<\frac{\alpha}{2}-\epsilon_{\ell_0}$, where $\epsilon_{\ell_0}$ is a small positive constant (adjusted with respect to $c_{3}^{(s)}$) (\ref{eq:seccondthmstrq0}) holds which is, as stated above, an analogue to the condition given in (\ref{eq:seccondthmstr}) in Theorem \ref{thm:thmstrthrlq}. This essentially means that when $q\rightarrow 0$ one has that the threshold curve approaches the best possible curve $\frac{\beta}{\alpha}=\frac{1}{2}$. This is the same conclusion we achieved in Section \ref{sec:secthr} when we studied sectional thresholds. Of course, as stated a couple of times earlier, if our goal was to show what is the best curve one can achieve when $q\rightarrow 0$ we would not need all of the machinery that we just used. However, the idea was different. We essentially wanted to show what are the limits of the methodology that we introduced in this paper. It turns out that when $q\rightarrow 0$ our methodology is good enough to recover the best possible threshold curve. It is though not as likely that this is the case for any other $q$.

\subsubsection{$q=\frac{1}{2}$}
\label{sec:strthrspecialq05}

As was the case when we studied sectional thresholds, one can also make substantial simplifications when $q=\frac{1}{2}$. However, the remaining integrals are still quite involved and skip presenting this easy but tedious exercise.

\section{$\ell_q$-minimization weak threshold}
\label{sec:weakthr}

In this section we at the weak thresholds of $\ell_q$ minimization. As earlier, we will slit the presentation into two parts; the first one will introduce a few preliminary results and the second one will contain the main arguments.

\subsection{Weak threshold preliminaries}
\label{sec:weakthrprelim}

Below we will present a way to quantify behavior of $\beta_{weak}^{(q)}$. As usual, we rely on some of the mechanisms presented in \cite{StojnicCSetam09}, some of those presented in Section \ref{sec:secthr}, and some of those presented in \cite{StojnicLqThrBnds10}. We start by introducing a nice way of characterizing weak success/failure of (\ref{eq:lq}).

\begin{theorem}(A given fixed $\x$ \cite{StojnicLqThrBnds10})
Assume that an $m\times n$ matrix $A$ is given. Let $\tilde{\x}$ be a $k$-sparse vector and let $\tilde{\x}_1=\tilde{\x}_2=\dots=\tilde{\x}_{n-k}=0$. Further, assume that $\y=A\tilde{\x}$ and that $\w$ is
an $n\times 1$ vector. If
\begin{equation}
(\forall \w\in \textbf{R}^n | A\w=0) \quad  \sum_{i=1}^{n-k}|\w_i|^q+\sum_{i=n-k+1}^n|\tilde{\x}_i+\w_i|^q>\sum_{i=n-k+1}^{n}|\tilde{\x}_{i}|^q
\label{eq:thmeqgenweak1}
\end{equation}
then the solution of (\ref{eq:lq}) obtained for pair $(\y,A)$ is $\tilde{\x}$.
\label{thm:thmgenweak}
\end{theorem}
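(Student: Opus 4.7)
The plan is to run the standard contradiction/direct-comparison argument for $\ell_q$ recovery, adapted to the support pattern assumed for $\tilde{\x}$. Suppose $\hat{\x}$ is any feasible point for (\ref{eq:lq}), i.e.\ any vector in $\mathbb{R}^n$ with $A\hat{\x}=\y$. Because $\y=A\tilde{\x}$, the difference $\w := \hat{\x}-\tilde{\x}$ lies in the null space of $A$. So to show that $\tilde{\x}$ is the unique minimizer of $\|\cdot\|_q$ subject to $A\x=\y$, it suffices to show that for every nonzero $\w$ with $A\w=0$ one has $\|\tilde{\x}+\w\|_q^q>\|\tilde{\x}\|_q^q$; that is exactly what the hypothesis (\ref{eq:thmeqgenweak1}) is designed to yield.

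The second step is simply to split $\|\tilde{\x}+\w\|_q^q$ according to the support of $\tilde{\x}$. Since $\tilde{\x}_i=0$ for $i\le n-k$, one has
\begin{equation*}
\|\tilde{\x}+\w\|_q^q=\sum_{i=1}^{n-k}|\w_i|^q+\sum_{i=n-k+1}^{n}|\tilde{\x}_i+\w_i|^q,
\end{equation*}
so the inequality (\ref{eq:thmeqgenweak1}) can be rewritten as $\|\tilde{\x}+\w\|_q^q>\|\tilde{\x}\|_q^q$. Applied to $\w=\hat{\x}-\tilde{\x}$ (with the understanding that the quantifier in (\ref{eq:thmeqgenweak1}) is over nonzero $\w$, since for $\w=0$ the statement is vacuous), this gives $\|\hat{\x}\|_q^q>\|\tilde{\x}\|_q^q$ for every $\hat{\x}\neq \tilde{\x}$ that is feasible for (\ref{eq:lq}). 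Hence $\tilde{\x}$ is the unique minimizer, which is the claim.

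There is essentially no hard step here: the argument is the standard ``null-space property in $\ell_q$ form.'' The only care needed is (i) to restrict the quantifier to $\w\neq 0$ so that (\ref{eq:thmeqgenweak1}) can actually hold, and (ii) to note that no convexity of $\|\cdot\|_q$ is used anywhere, so the same direct comparison works for all $0\le q\le 1$ (including the non-convex regime). The theorem is stated only as a sufficient condition on $A$ (via its null space) for recovery of a single given $\tilde{\x}$, so no probabilistic reasoning is needed at this stage; the statistical content enters only later, when one tries to estimate, for random $A$, when (\ref{eq:thmeqgenweak1}) holds with overwhelming probability.
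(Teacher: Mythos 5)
Your argument is correct and is exactly the standard null-space comparison that the paper has in mind (the paper itself only remarks that the proof is ``very simple'' and defers it to \cite{StojnicLqThrBnds10}): write any feasible $\hat{\x}$ as $\tilde{\x}+\w$ with $A\w=0$, split $\|\tilde{\x}+\w\|_q^q$ over the zero and nonzero parts of $\tilde{\x}$, and invoke (\ref{eq:thmeqgenweak1}) for $\w\neq 0$. Your observation that the quantifier must be read as ranging over nonzero $\w$ is the right reading of the statement, so there is no gap.
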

\begin{proof}
The proof is of course very simple and for completeness is included in \cite{StojnicLqThrBnds10}.
\end{proof}

We then, following the methodology of the previous section (and ultimately of \cite{StojnicCSetam09,StojnicLqThrBnds10}),
start by defining a set $\Sweak$
\begin{equation}
\Sweak(\tilde{\x})=\{\w\in S^{n-1}| \quad \sum_{i=n-k+1}^n |\tilde{\x}_i|^q\geq \sum_{i=1}^{n-k}|\w_i|^q+\sum_{i=n-k+1}^n|\tilde{\x}_i+\w_i|^q\},\label{eq:defSweak}
\end{equation}
where $S^{n-1}$ is the unit sphere in $R^n$. The methodology of the previous section (and ultimately the one of \cite{StojnicCSetam09,StojnicLqThrBnds10}) then proceeds by considering the following optimization problem
\begin{equation}
\xi_{weak}(\tilde{\x})=\min_{\w\in\Sweak(\tilde{\x})}\|A\w\|_2,\label{eq:negham1weak}
\end{equation}
where $q=1$ in the definition of $\Sweak$ (the same will remain true for any $0\leq q\leq 1$). One can then argue as in the previous sections: if $\xi_{weak}$ is positive with overwhelming probability for certain combination of $k$, $m$, and $n$ then for $\alpha=\frac{m}{n}$ one has a lower bound $\beta_{weak}=\frac{k}{n}$ on the true value of the weak threshold with overwhelming probability. Following \cite{StojnicCSetam09} one has that $\xi_{weak}$ concentrates, which essentially means that if we can show that $\min_{\tilde{\x}}(E(\xi_{weak}(\tilde{\x})))>0$ for certain $k$, $m$, and $n$ we can then obtain a lower bound on the weak threshold. In fact, this is precisely what was done in \cite{StojnicCSetam09}. Moreover, as shown in \cite{StojnicUpper10}, the results obtained in \cite{StojnicCSetam09} are actually exact.
The main reason of course was ability to determine $E\xi_{weak}$ exactly.

When it comes to the weak thresholds for a general $q$ we in \cite{StojnicLqThrBnds10} adopted the strategy similar to the one employed in \cite{StojnicCSetam09}. However, the results we obtained through such a consideration were not exact. The main reason was an inability to determine $E\xi_{weak}$ exactly for a general $q<1$. We were then left with the lower bounds which turned out to be loose. In this section we will use some of the ideas from the previous section (and essentially those from \cite{StojnicMoreSophHopBnds10,StojnicLiftStrSec13}) to provide a substantial conceptual improvements on bounds given in \cite{StojnicLqThrBnds10}. A limited numerical exploration also indicates that they are likely in turn to reflect in a practical improvement of the weak thresholds as well.

We start by emulating what was done in the previous sections, i.e. by presenting a way to create a lower-bound on the optimal value of (\ref{eq:negham1weak}).

\subsection{Lower-bounding $\xi_{weak}$}
\label{sec:lbxiweak}

In this section we will look at the problem from (\ref{eq:negham1str}). We recall that as earlier, we will consider a statistical scenario and assume that the elements of $A$ are i.i.d. standard normal random variables. Such a scenario was considered in \cite{StojnicLiftStrSec13} as well and the following was done.
First we reformulated the problem in (\ref{eq:negham1weak}) in the following way
\begin{equation}
\xi_{weak}=\min_{\w\in\Sweak}\max_{\|\y\|_2=1}\y^TA\w.\label{eq:sqrtnegham2weak}
\end{equation}
Then using results of \cite{StojnicHopBnds10} we established a lemma very similar to the following one:
\begin{lemma}
Let $A$ be an $m\times n$ matrix with i.i.d. standard normal components. Let $\Sweak(\tilde{\x})$ be a collection of sets defined in (\ref{eq:defSweak}). Let $\g$ and $\h$ be $n\times 1$ and $m\times 1$ vectors, respectively, with i.i.d. standard normal components. Also, let $g$ be a standard normal random variable and let $c_3$ be a positive constant. Then
\begin{equation}
\max_{\tilde{\x}}E(\max_{\w\in\Sweak}\min_{\|\y\|_2=1}e^{-c_3(\y^T A\w + g)})\leq \max_{\tilde{\x}}E(\max_{\w\in\Ssec}\min_{\|\y\|_2=1}e^{-c_3(\g^T\y+\h^T\w)}).\label{eq:negexplemmaweak}
\end{equation}\label{lemma:negexplemmaweak}
\end{lemma}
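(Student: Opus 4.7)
The plan is to mimic the proofs of Lemmas \ref{lemma:negexplemma} and \ref{lemma:negexplemmastr} almost verbatim, with the only modification coming from the fact that the feasible set now depends on the parameter $\tilde{\x}$. First I would fix an arbitrary deterministic $\tilde{\x}$ and treat it as a parameter; for such a fixed $\tilde{\x}$ the set $\Sweak(\tilde{\x})$, as defined in (\ref{eq:defSweak}), is a deterministic subset of the unit sphere $S^{n-1}$ and no source of randomness other than $A,\g,\h,g$ enters the expression.

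With $\tilde{\x}$ frozen, the two Gaussian processes $\y^T A\w + g$ (indexed by $(\w,\y)\in \Sweak(\tilde{\x})\times \{\|\y\|_2=1\}$) and $\g^T\y+\h^T\w$ have exactly the covariance structure required by Gordon's min-max comparison theorem from \cite{Gordon85}: their single-index variances agree, and the cross covariance inequality needed for the min-max (max-min) direction holds pointwise on the index set, since $\w\in\Sweak(\tilde{\x})\subset S^{n-1}$ forces $\|\w\|_2=1$ just as in the previous two lemmas. Thus the argument of \cite{StojnicHopBnds10,StojnicLiftStrSec13} applies verbatim with the set $\Ssec$ simply replaced by $\Sweak(\tilde{\x})$, yielding
\[
E\Bigl(\max_{\w\in\Sweak(\tilde{\x})}\min_{\|\y\|_2=1}e^{-c_3(\y^T A\w + g)}\Bigr)\leq E\Bigl(\max_{\w\in\Sweak(\tilde{\x})}\min_{\|\y\|_2=1}e^{-c_3(\g^T\y+\h^T\w)}\Bigr).
\]
Taking the maximum over $\tilde{\x}$ on both sides preserves the inequality and gives the statement of the lemma.

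The only conceptual subtlety, and what I would flag as the main (minor) obstacle, is the dependence of the feasible set on $\tilde{\x}$: one must take care to apply Gordon's theorem before the outer $\max_{\tilde{\x}}$, since $\tilde{\x}$ is not part of the probability space. Once this is handled by the conditioning argument above, the proof is a routine specialization of the earlier lemmas and introduces no structural novelty; as in those cases I would omit the standard computational details and refer back to \cite{Gordon85,StojnicHopBnds10,StojnicLiftStrSec13}.
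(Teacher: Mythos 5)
Your proposal is correct and follows essentially the same route as the paper: the paper also proves the inequality for each fixed $\tilde{\x}$ as a direct application of Gordon's comparison theorem from \cite{Gordon85}, exactly as in Lemmas \ref{lemma:negexplemma} and \ref{lemma:negexplemmastr}, with only the feasible set changed, and the outer $\max_{\tilde{\x}}$ then preserves the pointwise inequality. Your explicit remark about freezing $\tilde{\x}$ before invoking the comparison is a reasonable clarification of what the paper leaves implicit.
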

\begin{proof}
As mentioned in the previous sections (as well as in \cite{StojnicLiftStrSec13} and earlier in \cite{StojnicHopBnds10}), the proof is a standard/direct application of a theorem from \cite{Gordon85}. We omit the details.
\end{proof}

Following what was done after Lemma 3 in \cite{StojnicHopBnds10} one arrives at the following analogue of \cite{StojnicHopBnds10}'s equation $(57)$:
\begin{equation}
E(\min_{\w\in\Sweak}\|A\w\|_2)\geq
\frac{c_3}{2}-\frac{1}{c_3}\log(E(\max_{\w\in\Sweak}(e^{-c_3\h^T\w})))
-\frac{1}{c_3}\log(E(\min_{\|\y\|_2=1}(e^{-c_3\g^T\y}))).\label{eq:chneg8weak}
\end{equation}
Let $c_3=c_3^{(s)}\sqrt{n}$ where $c_3^{(s)}$ is a constant independent of $n$. Then (\ref{eq:chneg8weak}) becomes
\begin{eqnarray}
\hspace{-.5in}\frac{E(\min_{\w\in\Sweak}\|A\w\|_2)}{\sqrt{n}}
& \geq &
\frac{c_3^{(s)}}{2}-\frac{1}{nc_3^{(s)}}\log(E(\max_{\w\in\Sweak}(e^{-c_3^{(s)}\h^T\w})))
-\frac{1}{nc_3^{(s)}}\log(E(\min_{\|\y\|_2=1}(e^{-c_3^{(s)}\sqrt{n}\g^T\y})))\nonumber \\
& = &-(-\frac{c_3^{(s)}}{2}+I_{weak}(c_3^{(s)},\beta)+I_{sph}(c_3^{(s)},\alpha)),\label{eq:chneg9weak}
\end{eqnarray}
where
\begin{eqnarray}
I_{weak}(c_3^{(s)},\beta) & = & \frac{1}{nc_3^{(s)}}\log(E(\max_{\w\in\Sweak}(e^{-c_3^{(s)}\h^T\w})))\nonumber \\
I_{sph}(c_3^{(s)},\alpha) & = & \frac{1}{nc_3^{(s)}}\log(E(\min_{\|\y\|_2=1}(e^{-c_3^{(s)}\sqrt{n}\g^T\y}))).\label{eq:defIsstr}
\end{eqnarray}

As in previous section, the above bound is effectively correct for any positive constant $c_3^{(s)}$. To make it operational one needs to estimate $I_{weak}(c_3^{(s)},\beta)$ and $I_{sph}(c_3^{(s)},\alpha)$. Of course, $I_{sph}(c_3^{(s)},\alpha)$ has already been characterized in (\ref{eq:gamaiden3}) and (\ref{eq:Isph}). That basically means that the only thing that is left to characterize is $I_{weak}(c_3^{(s)},\beta)$. To facilitate the exposition we will, as earlier, focus only on the large $n$ scenario. Let $f(\w)=-\h^T\w$. Following \cite{StojnicLqThrBnds10} one can arrive at
\begin{equation}
\max_{\w\in\Sweak}f(\w)=-\min_{\w\in\Sweak} -\h^T\w
\leq\min_{\gamma_{weak}\geq 0,\nu_{weak}\geq 0} f_3(q,\h,\nu_{weak},\gamma_{weak},\beta)+\gamma_{weak},\label{eq:seceq1weak}
\end{equation}
where
\begin{multline}
f_3(q,\h,\nu_{weak},\gamma_{weak},\beta)=\max_{\w} (\sum_{i=n-k+1}^{n}(\h_i\w_i-\nu_{weak}|\tilde{\x}_i+\w_i|^q+\nu_{weak}|\tilde{\x}_i|^q-\gamma_{weak}\w_i^2)\\
+\sum_{i=1}^{n-k}(\h_i|\w_i|-\nu_{weak}|\w_i|^q-\gamma_{weak}\w_i^2)).\label{eq:deff1weak}
\end{multline}
Then
\begin{multline}
I_{weak}(c_3^{(s)},\beta)  =  \frac{1}{nc_3^{(s)}}\log(E(\max_{\w\in\Sweak}(e^{-c_3^{(s)}\h^T\w}))) = \frac{1}{nc_3^{(s)}}\log(E(\max_{\w\in\Sweak}(e^{c_3^{(s)}f(\w))})))\\
=\frac{1}{nc_3^{(s)}}\log(Ee^{c_3^{(s)}\sqrt{n}\min_{\gamma_{weak},\nu_{weak}\geq 0}(f_3(\h,\nu_{weak},\gamma_{weak},\beta)+\gamma_{weak})})\\
\doteq \frac{1}{nc_3^{(s)}}\min_{\gamma_{weak},\nu_{weak}\geq 0}\log(Ee^{c_3^{(s)}\sqrt{n}(f_3(q,\h,\nu_{weak},\gamma_{weak},\beta)+\gamma_{weak})})\\
=\min_{\gamma_{weak},\nu_{weak}\geq 0}(\frac{\gamma_{weak}}{\sqrt{n}}+\frac{1}{nc_3^{(s)}}\log(Ee^{c_3^{(s)}\sqrt{n}(f_3(q,\h,\nu_{weak},\gamma_{weak},\beta))})),\label{eq:gamaiden1weak}
\end{multline}
where, as earlier, $\doteq$ stands for equality when $n\rightarrow \infty$. Now if one sets $\w_{i}=\frac{\w_{i}^{(s)}}{\sqrt{n}}$, $\gamma_{weak}=\gamma_{weak}^{(s)}\sqrt{n}$, and $\nu_{weak}=\nu_{weak}^{(s)}\sqrt{n}^{q-1}$ (where $\w_{i}^{(s)}$, $\gamma_{weak}^{(s)}$, and $\nu_{weak}^{(s)}$ are independent of $n$) then (\ref{eq:gamaiden1weak}) gives
\begin{multline}
I_{weak}(c_3^{(s)},\beta)
=\min_{\gamma_{weak},\nu_{weak}\geq 0}(\frac{\gamma_{weak}}{\sqrt{n}}+\frac{1}{nc_3^{(s)}}\log(Ee^{c_3^{(s)}\sqrt{n}(f_3(q,\h,\nu_{weak},\gamma_{weak},\beta))})\\
\hspace{-.5in}=\min_{\gamma_{weak}^{(s)},\nu_{weak}^{(s)}\geq 0}(\gamma_{weak}^{(s)}+\frac{\beta}{c_3^{(s)}}\log\left (Ee^{\left (c_3^{(s)}
\max_{\w_i^{(s)}}(\h_i\w_i^{(s)}-\nu_{weak}^{(s)}|\tilde{\x}_i+\w_i^{(s)}|^q+\nu_{weak}^{(s)}|\tilde{\x}_i|^q-\gamma_{weak}^{(s)}(\w_i^{(s)})^2)\right )}\right )\\
+\frac{1-\beta}{c_3^{(s)}}\log\left (Ee^{\left (c_3^{(s)}\max_{\w_j^{(s)}}(|\h_j||\w_j^{(s)}|-\nu_{weak}^{(s)}|\w_j^{(s)}|^q-\gamma_{weak}^{(s)}(\w_j^{(s)})^2)
\right )}\right ))\\
\\=\min_{\gamma_{weak}^{(s)},\nu_{weak}^{(s)}\geq 0}(\gamma_{weak}^{(s)}+\frac{\beta}{c_3^{(s)}}\log(I_{weak}^{(1)})
+\frac{1-\beta}{c_3^{(s)}}\log(I_{weak}^{(2)})),\\\label{eq:gamaiden2weak}
\end{multline}
where
\begin{eqnarray}
I_{weak}^{(1)} & = & Ee^{\left (c_3^{(s)}
\max_{\w_i^{(s)}}(\h_i\w_i^{(s)}-\nu_{weak}^{(s)}|\tilde{\x}_i+\w_i^{(s)}|^q+\nu_{weak}^{(s)}|\tilde{\x}_i|^q-\gamma_{weak}^{(s)}(\w_i^{(s)})^2)\right )}\nonumber \\
I_{weak}^{(2)} & = & Ee^{\left (c_3^{(s)}\max_{\w_j^{(s)}}(|\h_j||\w_j^{(s)}|-\nu_{weak}^{(s)}|\w_j^{(s)}|^q-\gamma_{weak}^{(s)}(\w_j^{(s)})^2)
\right )}.\label{eq:defI1I2weak}
\end{eqnarray}

We summarize the above results related to the weak threshold ($\beta_{weak}^{(q)}$) in the following theorem.

\begin{theorem}(Weak threshold - lifted lower bound)
Let $A$ be an $m\times n$ measurement matrix in (\ref{eq:system})
with i.i.d. standard normal components. Let $\tilde{\x}\in R^n$ be a $k$-sparse vector for which $\tilde{\x}_1=0,\tilde{\x}_2=0,,\dots,\tilde{\x}_{n-k}=0$ and let  $\y=A\tilde{\x}$. Let $k,m,n$ be large
and let $\alpha=\frac{m}{n}$ and $\betaweak^{(q)}=\frac{k}{n}$ be constants
independent of $m$ and $n$. Let $c_3^{(s)}$ be a positive constant and set
\begin{equation}
\widehat{\gamma_{sph}^{(s)}}=\frac{2c_3^{(s)}-\sqrt{4(c_3^{(s)})^2+16\alpha}}{8},\label{eq:gamaiden3thmweak}
\end{equation}
and
\begin{equation}
I_{sph}(c_3^{(s)},\alpha)=
\left ( \widehat{\gamma_{sph}^{(s)}}-\frac{\alpha}{2c_3^{(s)}}\log(1-\frac{c_3^{(s)}}{2\widehat{\gamma_{sph}^{(s)}}}\right ).\label{eq:Isphthmweak}
\end{equation}
Further let
\begin{eqnarray}
I_{weak}^{(1)} & = & Ee^{\left (c_3^{(s)}
\max_{\w_i^{(s)}}(\h_i\w_i^{(s)}-\nu_{weak}^{(s)}|\tilde{\x}_i+\w_i^{(s)}|^q+\nu_{weak}^{(s)}|\tilde{\x}_i|^q-\gamma_{weak}^{(s)}(\w_i^{(s)})^2)\right )}\nonumber \\
I_{weak}^{(2)} & = & Ee^{\left (c_3^{(s)}\max_{\w_j^{(s)}}(|\h_j||\w_j^{(s)}|-\nu_{weak}^{(s)}|\w_j^{(s)}|^q-\gamma_{weak}^{(s)}(\w_j^{(s)})^2)
\right )},\label{eq:defI1I2secthmweak}
\end{eqnarray}
and
\begin{equation}
I_{weak}(c_3^{(s)},\betaweak^{(q)})=\min_{\gamma_{weak}^{(s)},\nu_{weak}^{(s)}\geq 0}(\gamma_{weak}^{(s)}+\frac{\betaweak^{(q)}}{c_3^{(s)}}\log(I_{weak}^{(1)})+\frac{1-\betaweak^{(q)}}{c_3^{(s)}}\log(I_{weak}^{(2)})).\label{eq:seccondthmweak}
\end{equation}
If $\alpha$ and $\betaweak^{(q)}$ are such that
\begin{equation}
\max_{\tilde{\x}_{i,i>n-k}}\min_{c_3^{(s)}}(-\frac{c_3^{(s)}}{2}+I_{weak}(c_3^{(s)},\betaweak^{(q)})+I_{sph}(c_3^{(s)},\alpha))<0,\label{eq:seccondthmweak}
\end{equation}
then with overwhelming probability the solution of (\ref{eq:lq}) for pair $(\y,A)$ is $\tilde{\x}$.\label{thm:thmweakthrlq}
\end{theorem}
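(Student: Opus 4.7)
My plan is to follow the exact same template used for the sectional and strong threshold theorems earlier in the paper, adapting the set definitions and objective functions to the weak setting. First, I would invoke Theorem~\ref{thm:thmgenweak} to reduce the claim ``with overwhelming probability (\ref{eq:lq}) recovers $\tilde{\x}$'' to the statement that the null space of $A$ avoids the set $\Sweak(\tilde{\x})$ defined in (\ref{eq:defSweak}). This in turn reduces to showing $\xi_{weak}(\tilde{\x}) = \min_{\w\in\Sweak(\tilde{\x})}\|A\w\|_2 > 0$ with overwhelming probability, uniformly over the worst-case $\tilde{\x}$ whose nonzero entries have indices $n-k+1,\dots,n$. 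Because $\xi_{weak}$ is a $1$-Lipschitz function of $A$ (argued exactly as in \cite{StojnicCSetam09}), it concentrates sharply around its mean, so it suffices to show $E\xi_{weak}(\tilde{\x}) > 0$ for the worst case $\tilde{\x}$.

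Next, applying Gordon's comparison inequality through Lemma~\ref{lemma:negexplemmaweak} and then repeating the algebraic manipulations following \cite{StojnicHopBnds10}'s equation (57) gives the lower bound (\ref{eq:chneg8weak}). Rescaling $c_3 = c_3^{(s)}\sqrt{n}$ produces (\ref{eq:chneg9weak}), which decomposes $E\xi_{weak}/\sqrt{n}$ into the three terms $c_3^{(s)}/2$, $-I_{weak}(c_3^{(s)},\beta)$, and $-I_{sph}(c_3^{(s)},\alpha)$. The spherical piece $I_{sph}$ was already computed in (\ref{eq:gamaiden3})--(\ref{eq:Isph}), so nothing new is needed there. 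The only truly new piece is $I_{weak}$, and it is handled by introducing Lagrange multipliers $\nu_{weak}\geq 0$ for the constraint defining $\Sweak(\tilde{\x})$ and $\gamma_{weak}\geq 0$ for the unit-sphere constraint; weak duality yields (\ref{eq:seceq1weak}) with the separable function $f_3$ in (\ref{eq:deff1weak}). The separability and the i.i.d. structure of $\h$ then let us pass the logarithm of the expectation through the sum of $n$ independent terms, and after rescaling $\w_i = \w_i^{(s)}/\sqrt{n}$, $\gamma_{weak} = \gamma_{weak}^{(s)}\sqrt{n}$, $\nu_{weak} = \nu_{weak}^{(s)}\sqrt{n}^{q-1}$, the expression collapses to (\ref{eq:gamaiden2weak}) with $I_{weak}^{(1)}, I_{weak}^{(2)}$ as defined in the statement.

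Finally, combining these estimates gives
\[
\frac{E\xi_{weak}(\tilde{\x})}{\sqrt{n}} \;\geq\; -\Bigl(-\tfrac{c_3^{(s)}}{2} + I_{weak}(c_3^{(s)},\beta_{weak}^{(q)}) + I_{sph}(c_3^{(s)},\alpha)\Bigr),
\]
and since $c_3^{(s)}$ is arbitrary and the bound is valid for each choice, we may minimize the right-hand side over $c_3^{(s)}$. To obtain a bound uniform in $\tilde{\x}$, we take the maximum over admissible $\tilde{\x}_{i,i>n-k}$, which is exactly the quantifier structure in (\ref{eq:seccondthmweak}). Whenever that quantity is strictly negative, $E\xi_{weak} > 0$ for every allowed $\tilde{\x}$, and concentration promotes this to positivity with overwhelming probability, completing the proof.

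The main obstacle I anticipate is the handling of the $\tilde{\x}$-dependence inside $I_{weak}^{(1)}$: unlike the sectional and strong cases where the objective was invariant under the symmetries of $\Sstr$ or $\Ssec$ and the maximization over $\w_i^{(s)}$ was essentially one-dimensional with a universal distribution, here the per-coordinate maximization depends on $\tilde{\x}_i$, so one must verify that the outer $\max_{\tilde{\x}}$ can be commuted through $E$ and $\min_{c_3^{(s)}}$ in a controlled way (this is what forces the $\max$-$\min$ ordering in (\ref{eq:seccondthmweak})). Once one is careful that weak duality gives an upper bound on $\max_{\w\in\Sweak}f(\w)$ in the correct direction and that the exchange $E\log \doteq \log E$ induced by concentration of the separable sum is justified in the limit $n\to\infty$, the remaining steps are entirely parallel to the sectional threshold argument of Section~\ref{sec:secthr}.
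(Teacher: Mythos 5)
Your proposal is correct and follows essentially the same route as the paper: the paper's proof is literally ``follows from the above discussion,'' and that discussion is exactly the chain you reconstruct — Theorem~\ref{thm:thmgenweak}, the set $\Sweak(\tilde{\x})$, concentration of $\xi_{weak}$, Lemma~\ref{lemma:negexplemmaweak} via Gordon's comparison, the rescaling $c_3=c_3^{(s)}\sqrt{n}$, the dual bound through $f_3$, and the per-coordinate separation yielding $I_{weak}^{(1)}$, $I_{weak}^{(2)}$, followed by the $\max_{\tilde{\x}}\min_{c_3^{(s)}}$ quantification. Your closing remark about the $\tilde{\x}$-dependence forcing the max--min ordering matches how the paper handles it (via $\max_{\tilde{\x}}E(\cdot)$ in the lemma and $\max_{\tilde{\x}_{i,i>n-k}}$ in the final condition).
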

\begin{proof}
Follows from the above discussion.
\end{proof}

One also has immediately the following corollary.

\begin{corollary}(Weak threshold - lower bound \cite{StojnicLqThrBnds10})
Let $A$ be an $m\times n$ measurement matrix in (\ref{eq:system})
with i.i.d. standard normal components. Let $\tilde{\x}\in R^n$ be a $k$-sparse vector for which $\tilde{\x}_1=0,\tilde{\x}_2=0,,\dots,\tilde{\x}_{n-k}=0$ and let  $\y=A\tilde{\x}$. Let $k,m,n$ be large
and let $\alpha=\frac{m}{n}$ and $\betaweak^{(q)}=\frac{k}{n}$ be constants
independent of $m$ and $n$. Let
\begin{equation}
I_{sph}(\alpha)=
-\sqrt{\alpha}.\label{eq:Isphcorweak}
\end{equation}
Further let
\begin{eqnarray}
I_{weak}^{(1)} & = & E
\max_{\w_i^{(s)}}(\h_i\w_i^{(s)}-\nu_{weak}^{(s)}|\tilde{\x}_i+\w_i^{(s)}|^q+\nu_{weak}^{(s)}|\tilde{\x}_i|^q-\gamma_{weak}^{(s)}(\w_i^{(s)})^2)\nonumber \\
I_{weak}^{(2)} & = & E\max_{\w_j^{(s)}}(|\h_j||\w_j^{(s)}|-\nu_{weak}^{(s)}|\w_j^{(s)}|^q-\gamma_{weak}^{(s)}(\w_j^{(s)})^2),\label{eq:defI1I2weakcor}
\end{eqnarray}
and
\begin{equation}
I_{weak}(\betaweak^{(q)})=\min_{\gamma_{weak}^{(s)},\nu_{weak}^{(s)}\geq 0}(\gamma_{weak}^{(s)}+\betaweak^{(q)}I_{weak}^{(1)}+(1-\betaweak^{(q)})I_{weak}^{(2)}).\label{eq:seccondcorweak}
\end{equation}
If $\alpha$ and $\betaweak^{(q)}$ are such that
\begin{equation}
\max_{\tilde{\x}_{i,i>n-k}}(I_{weak}(\betaweak^{(q)})+I_{sph}(\alpha))<0,\label{eq:seccondcorweak}
\end{equation}
then with overwhelming probability the solution of (\ref{eq:lq}) for pair $(\y,A)$ is $\tilde{\x}$.\label{cor:corweakthrlq}
\end{corollary}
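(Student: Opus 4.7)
The plan is to derive the corollary as the limiting case of Theorem \ref{thm:thmweakthrlq} when the free parameter $c_3^{(s)}$ is sent to $0^+$. Since the theorem's condition (\ref{eq:seccondthmweak}) requires only existence of some $c_3^{(s)}$ making the expression negative, it suffices to show that, as $c_3^{(s)}\to 0$, each quantity appearing in the theorem converges to its ``unlifted'' analogue in the corollary, and that the $-c_3^{(s)}/2$ additive term vanishes. A strict inequality in the corollary will then yield, by continuity, a strict inequality in the theorem for all sufficiently small $c_3^{(s)}>0$, which is enough to invoke Theorem \ref{thm:thmweakthrlq}.

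First I would handle the spherical term. Expanding $\widehat{\gamma_{sph}^{(s)}}$ from (\ref{eq:gamaiden3thmweak}) as $c_3^{(s)}\to 0$ gives $\widehat{\gamma_{sph}^{(s)}}\to -\sqrt{\alpha}/2$, and a Taylor expansion of $\log\!\left(1-\frac{c_3^{(s)}}{2\widehat{\gamma_{sph}^{(s)}}}\right)$ around $0$ shows that the $-\frac{\alpha}{2c_3^{(s)}}\log(\cdots)$ term in (\ref{eq:Isphthmweak}) tends to $\frac{\alpha}{4\widehat{\gamma_{sph}^{(s)}}}\to -\sqrt{\alpha}/2$. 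Adding the two contributions recovers $I_{sph}(\alpha)=-\sqrt{\alpha}$ of (\ref{eq:Isphcorweak}).

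Next I would handle the two ``weak'' expectations. For either $i\in\{1,2\}$, the quantity in the theorem has the form
\begin{equation*}
\frac{1}{c_3^{(s)}}\log\bigl(E e^{c_3^{(s)} X_i}\bigr),\quad X_i=\max_{\w^{(s)}}\Phi_i(\h,\w^{(s)},\nu_{weak}^{(s)},\gamma_{weak}^{(s)}).
\end{equation*}
A standard moment-generating-function expansion gives $\frac{1}{t}\log(Ee^{tX})\to E[X]$ as $t\to 0^+$, provided the limit can be interchanged with the expectation. I would justify this via dominated convergence: for Gaussian $\h_i$ and fixed positive $\gamma_{weak}^{(s)}$, the inner maximum $X_i$ is bounded by a linear-plus-constant function of $|\h_i|$ (the quadratic penalty $-\gamma_{weak}^{(s)}(\w_i^{(s)})^2$ dominates any growth from $\h_i\w_i^{(s)}$ or the concave $|\w_i^{(s)}|^q$ terms), so all exponential moments are finite uniformly for small $c_3^{(s)}$ and the expansion is legitimate. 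Taking the limit term by term in (\ref{eq:seccondthmweak}) inside the theorem, combining with the $I_{sph}$ limit above, and using $-c_3^{(s)}/2\to 0$, transforms the theorem's condition into exactly (\ref{eq:seccondcorweak}).

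The main (very mild) obstacle is the justification of the interchange of limit and expectation when passing from $E e^{c_3^{(s)} X_i}$ to $E[X_i]$; everything else is bookkeeping and continuity. Note also that the $\max_{\tilde{\x}_{i,i>n-k}}$ in the corollary corresponds directly to the same outer maximization in the theorem, since none of the limiting operations interact with the worst-case choice of the nonzero entries of $\tilde{\x}$. Once (\ref{eq:seccondcorweak}) is known to be strict, continuity in $c_3^{(s)}$ yields a small $c_3^{(s)}>0$ at which (\ref{eq:seccondthmweak}) is also strict, and Theorem \ref{thm:thmweakthrlq} delivers the conclusion with overwhelming probability.
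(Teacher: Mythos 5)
Your proposal is correct and follows exactly the paper's route: the paper proves Corollary \ref{cor:corweakthrlq} simply by letting $c_3^{(s)}\rightarrow 0$ in Theorem \ref{thm:thmweakthrlq}, which is precisely your argument, only with the limiting computations (the expansion of $I_{sph}$ and the $\frac{1}{t}\log(Ee^{tX})\rightarrow E[X]$ step) spelled out in more detail than the paper bothers to give.
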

\begin{proof}
Follows from the above theorem by taking $c_3^{(s)}\rightarrow 0$.
\end{proof}

The results for the weak threshold obtained from the above theorem
are presented in Figure \ref{fig:weak}. To be a bit more specific, we selected four different values of $q$, namely $q\in\{0,0.1,0.3,0.5\}$ in addition to standard $q=1$ case already discussed in \cite{StojnicCSetam09}. Also, we present in Figure \ref{fig:weak} the results one can get from Theorem \ref{thm:thmweakthrlq} when $c_3^{(s)}\rightarrow 0$ (i.e. from Corollary \ref{cor:corweakthrlq}, see e.g. \cite{StojnicLqThrBnds10}).
\begin{figure}[htb]
\begin{minipage}[b]{.5\linewidth}
\centering
\centerline{\epsfig{figure=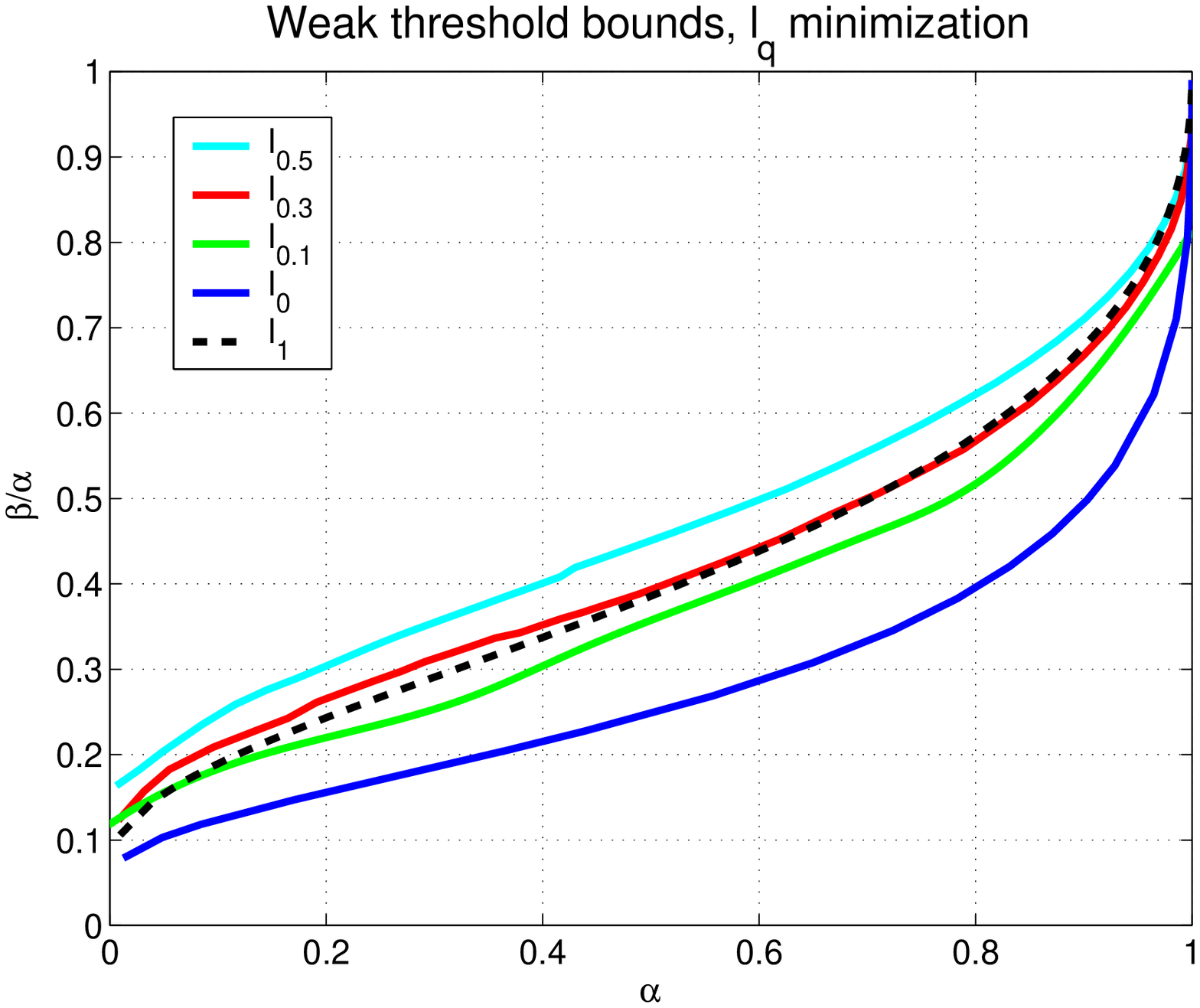,width=8cm,height=6.5cm}}
\end{minipage}
\begin{minipage}[b]{.5\linewidth}
\centering
\centerline{\epsfig{figure=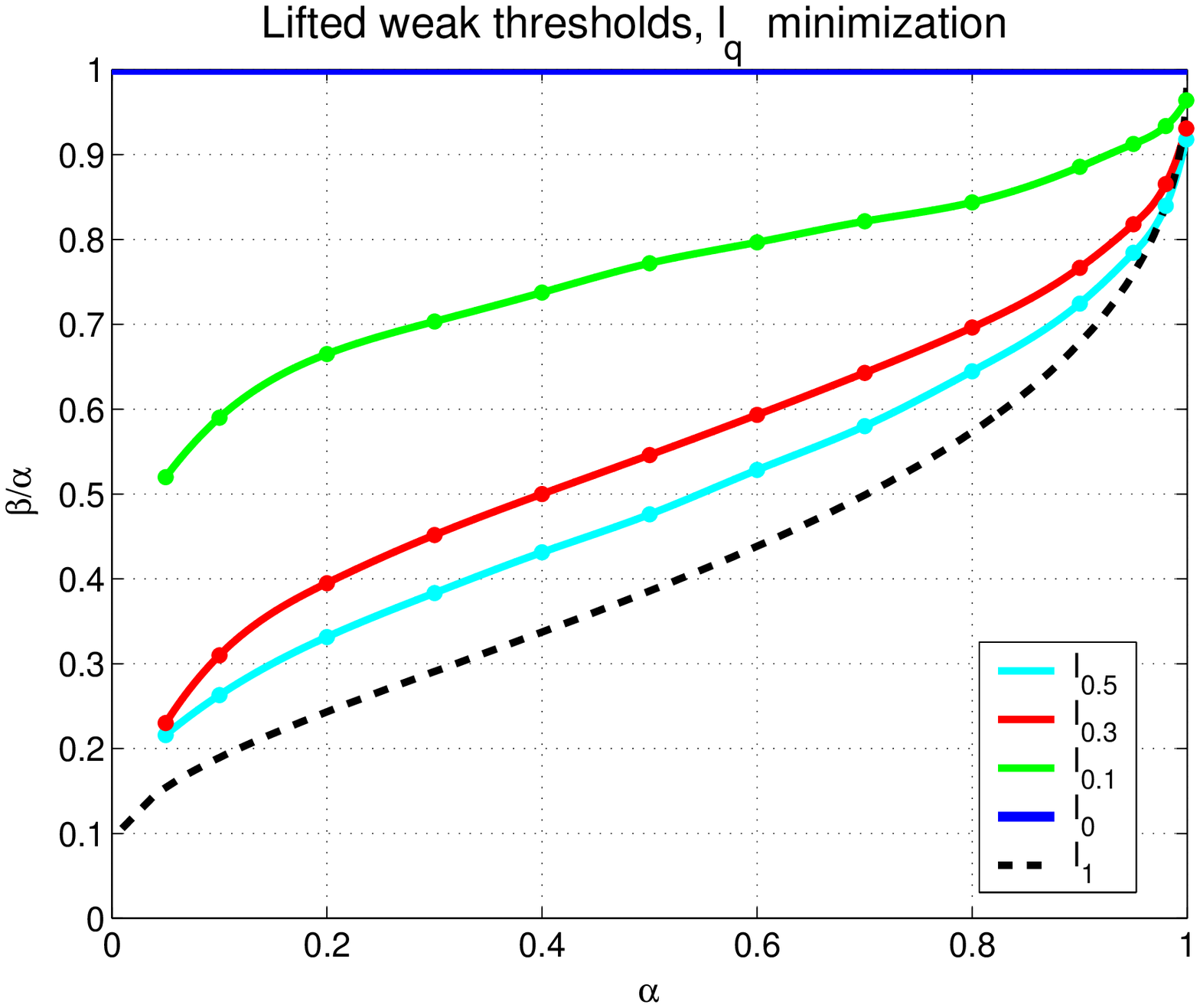,width=8cm,height=6.5cm}}
\end{minipage}
\caption{\emph{Weak} thresholds, $\ell_q$ optimization; a) left -- $c_3\rightarrow 0$; b) right -- optimized $c_3$}
\label{fig:weak}
\end{figure}

As can be seen from Figure \ref{fig:weak}, the results for selected values of $q$ are better than for $q=1$. Also the results improve on those presented in \cite{StojnicLqThrBnds10} and essentially obtained based on Corollary \ref{cor:corweakthrlq}, i.e. Theorem \ref{thm:thmweakthrlq} for $c_3^{(s)}\rightarrow 0$.

Also, we should again recall that all of presented results were obtained after numerical computations. These are on occasion even more involved than those presented in Section \ref{sec:secthr} and could be imprecise. In that light we would again suggest that one should take the results presented in Figure \ref{fig:sec} more as an illustration rather than as an exact plot of the achievable thresholds (this is especially true for curve $q=0.1$ since the smaller values of $q$ cause more numerical problems; in fact one can easily observe a slightly jittery shape of $q=0.1$ curves). Obtaining the presented results included several numerical optimizations which were all (except maximization over $\w$ and $\tilde{\x}$) done on a local optimum level. We do not know how (if in any way) solving them on a global optimum level would affect the location of the plotted curves. Also, additionally, numerical integrations were done on a finite precision level which could have potentially harmed the final results as well. Still, as earlier, we believe that the methodology can not achieve substantially more than what we presented in Figure \ref{fig:sec} (and hopefully is not severely degraded with numerical integrations and maximization over $\w$ and $\tilde{\x}$). Of course, we do reemphasize again that the results presented in Theorem \ref{thm:thmweakthrlq} are completely rigorous,
it is just that some of the numerical work that we performed could have been a bit imprecise.

\subsection{Special cases}
\label{sec:weakthrspecial}

One can again create a substantial simplification of results given in Theorem \ref{thm:thmweakthrlq} for certain values of $q$. For example, for $q=0$ or $q=1/2$ one can follow the strategy of previous sections and simplify some of the computations. However, such results (while simpler than those from Theorem \ref{thm:thmweakthrlq}) are still not very simple and we skip presenting them. We do mention, that this is in particular so since one also has to optimize over $\tilde{\x}$. We did however include the ideal plot for case $q=0$ in Figure \ref{fig:weak}.

\section{Conclusion}
\label{sec:conc}

In this paper we looked at classical under-determined linear systems with sparse solutions. We analyzed a particular optimization technique called $\ell_q$ optimization. While its a convex counterpart $\ell_1$ technique is known to work well often it is a much harder task to determine if
$\ell_q$ exhibits a similar or better behavior; and especially if it exhibits a better behavior how much better quantitatively it is.

In our recent work \cite{StojnicLqThrBnds10} we made some sort of progress in this direction. Namely, in \cite{StojnicLqThrBnds10}, we showed that in many cases the $\ell_q$ would provide stronger guarantees than $\ell_1$ and in many other ones we provided bounds that are better than the ones we could provide for $\ell_1$. Of course, having better bounds does not guarantee that the performance is better as well but in our view it served as a solid indication that overall, $\ell_q,q<1$, should work better than $\ell_1$.

In this paper we went a few steps further and created a powerful mechanism to lift the threshold bounds we provided in \cite{StojnicLqThrBnds10}. While the results are theoretically rigorous and certainly provide a substantial conceptual progress, their practical usefulness is predicated on numerically solving a collection of optimization problems. We left such a detailed study for a forthcoming paper and here provided a limited set of numerical results we obtained. According to the results we provided one has a substantial improvement on the threshold bounds obtained in \cite{StojnicLqThrBnds10}. Moreover, one of the main issues that hindered a complete success of the technique used in \cite{StojnicLqThrBnds10} was a bit surprising non-monotonic change in thresholds behavior with respect to the value of $q$. Namely, in \cite{StojnicLqThrBnds10}, we obtained bounds that were improving as $q$ was going down (a fact expected based on tightening of the sparsity relaxation). However, such an improving was happening only until $q$ was reaching towards a certain limit. As $q$ was decreasing beyond such a limit the bounds started going down and eventually in the most trivial case $q=0$ they even ended up being worse than the ones we obtained for $q=1$. Based on our limited numerical results, the mechanisms we provided in this paper at the very least do not seem to suffer from this phenomenon. In other words, the numerical results we provided (if correct) indicate that as $q$ goes down all the thresholds considered in this paper indeed go up.

Another interesting point is of course from a purely theoretical side. That essentially means, leaving aside for a moment all the required numerical work and its precision, can one say what the ultimate capabilities of the theoretical results we provided in this paper are. This is actually fairly hard to assess even if we were able to solve all numerical problems with a full precision. While we have a solid belief that when $q=1$ a similar set of results obtained in \cite{StojnicLiftStrSec13} is fairly close to the optimal one, here it is not as clear. We do believe that the theoretical results we provided here are also close to the optimal ones but probably not as close as the ones given in \cite{StojnicLqThrBnds10} are to their corresponding optimal ones. Of course, to get a better insight how far off they could be one would have to implement further nested upgrades along the lines of what was discussed in \cite{StojnicLiftStrSec13}. That makes the numerical work infinitely many times more cumbersome and while we have done it to a degree for problems considered in \cite{StojnicLiftStrSec13} for those considered here we have not. As mentioned in \cite{StojnicLiftStrSec13}, designing such an upgrade is practically relatively easy. However, the number of optimizing variables grows fast as well and we did not find it easy to numerically handle even the number of variables that we have had here.

Of course, as was the case in \cite{StojnicLqThrBnds10}, much more can be done, including generalizations of the presented concepts to many other variants of these problems. The examples include various different unknown vector structures (a priori known to be positive vectors, block-sparse, binary/box constrained vectors etc.), various noisy versions (approximately sparse vectors, noisy measurements $\y$), low rank matrices, vectors with partially known support and many others. We will present some of these applications in a few forthcoming papers.

\begin{singlespace}
\bibliographystyle{plain}
\bibliography{LiftLqThrBndsRefs}
\end{singlespace}

\end{document}